\documentclass[prc,aps,nofootinbib,showkeys,showpacs,twocolumn]{revtex4}
\usepackage{epsfig}
\usepackage{graphicx}
\usepackage{amssymb}
\usepackage{color}
\usepackage{upgreek}
\usepackage{mathtools}
\usepackage{comment} 
\usepackage{tcolorbox}
\usepackage{amsmath, amsthm, amssymb}

\newtheorem{proposition}{Proposition}

\usepackage{braket}
\usepackage{comment}
\usepackage{ulem}
\usepackage{bbm}

\usepackage{makecell}

\begin{document}

\title{Optimizing quantum encodings for analog simulation through dynamical algebra reachability}

\author{Mariane Mangin-Brinet} \email{mariane@lpsc.in2p3.fr}
\affiliation{Laboratoire de Physique Subatomique et de Cosmologie, CNRS/IN2P3/UGA, 38026 Grenoble, France}

\date{\today}
\begin{abstract}
Analog quantum computers  provides direct access to continuous many-body dynamics, but the native control Hamiltonians of an analog processor generate only a restricted operator space. Consequently, the fidelity with which they can reproduce a target Hamiltonian's dynamics depends not only on spectral agreement but on 
whether the physical controls can actually generate the required evolution, or, said differently,  on whether the encoded dynamics is compatible with the directions accessible to the hardware. 
We introduce a geometry-independent framework for diagnosing and optimizing this compatibility through the principal angles between the spectral algebra $\mathcal A(H)$ generated by a target Hamiltonian $H$ and a Krylov-type operator space $\mathcal K$ generated from the device's independently tunable control Hamiltonians and a fixed initial state. Because unitary conjugation preserves the spectrum of $H$, the encoding problem can be formulated as an optimization over the unitary orbit of the target Hamiltonian. We maximize a smooth subspace-overlap functional using Riemannian gradient descent on $U(d)$, thereby selecting a spectrally equivalent representation whose target algebra is better aligned with the native controls.
We apply this framework to the Hamming-weight-one deuteron Hamiltonian encoded on a Rydberg-atom analog processor. For global Rabi, phase, and detuning controls of a $N$ atoms computer, the geometry-independent Krylov space has dimension $\binom{N+3}{3}-1$. The standard binary encoding is strongly misaligned with this space for small system sizes, while 
principal-angle optimization substantially improves the algebraic compatibility of the encoding. For $N=3,4$, it can place the full target spectral algebra within the accessible operator 
space, whereas for $N=5,\ldots,8$ only partial algebraic alignment is achieved. Nevertheless, high-fidelity state preparation remains possible for the latter systems when geometry-dependent 
control resources are optimized. Moreover, using a single geometry-independent optimized encoding substantially reduces the sensitivity of state-preparation fidelity to the atomic geometry. 
These results establish algebraic reachability as a useful preprocessing criterion for analog encoding design: it identifies representation-level incompatibilities before device-specific geometry and pulse optimization, while the distinction between algebra-level and state-level reachability clarifies when full encoding optimization is necessary and when geometry-dependent control resources can compensate for incomplete algebraic alignment.
\end{abstract}

\keywords{quantum computing, quantum algorithms, analog quantum computing, reachability, quantum control theory}

\maketitle

\section{Introduction}

Analog quantum computers offer a fundamentally different paradigm for quantum simulation: instead of decomposing an algorithm into a sequence of elementary digital gates, they exploit the continuous evolution generated by a physical system to directly emulate the dynamics of a target model. This approach has attracted renewed interest for the study of many-body phenomena, lattice gauge theories, and nuclear structure problems on engineered platforms such as neutral-atom arrays, trapped ions, and superconducting circuits, where effective Hamiltonians with controllable interactions can be realized natively~\cite{Surace2020,Zhu2024,Morgado2011,Georgescu2014}. However, an analog device does not implement an arbitrary target Hamiltonian; it implements whatever Hamiltonians its native driving and interaction terms can generate. The central question for analog simulation is therefore not only whether a target model can be approximated energetically, but whether the encoding chosen to map the target problem onto the device's qubits preserves the dynamical structure that the device is actually capable of generating.

This question becomes acute once one recognizes that a given physical problem can be mapped onto a quantum register in many inequivalent ways. Any two encodings related by a unitary change of basis describe exactly the same spectrum and the same physics, yet they can differ enormously in how compatible they are with a specific set of native controls. A poorly chosen encoding may place the relevant dynamics along directions of Hilbert space that the analog controls simply cannot reach, regardless of how the pulse sequence, evolution time, or device geometry are optimized. Conversely, an encoding that is well matched to the native control algebra can make an otherwise inaccessible target state fully reachable. Distinguishing these two situations, and turning the distinction into a constructive design principle, is the goal of this work.

We address this question through the lens of dynamical algebra reachability. In control theory, reachability characterizes which states or transformations can be generated from a given set of controls~\cite{Zeier2011,DAlessandro2020,DAlessandro2021}; here we extend this notion from individual states to the full algebraic structure of the target Hamiltonian, in the spirit of recent classifications of the
dynamical Lie algebras generated by spin Hamiltonians~\cite{Wiersema2024}. We characterize a target model not by its Hamiltonian alone but by the spectral algebra $\mathcal A(H)=\operatorname{span}\{I,H,H^2,\ldots,H^{r-1}\}$ that it generates, and we characterize the analog device by the operator space $\mathcal K$ reachable from a fixed initial state under repeated commutators with its native control Hamiltonians -- the analog-control counterpart of a Krylov space. The compatibility between a given encoding of the target model and a given analog platform is then quantified by the principal angles between $\mathcal A(H)$ and $\mathcal K$: complete controllability of the encoded algebra corresponds to all principal angles vanishing, while any nonzero angle signals a direction of the target dynamics that the native controls cannot access.

Because $\mathcal A(H)$ is transported rigidly under unitary conjugation of $H$, while $\mathcal K$ is fixed by the hardware and the initial state, encoding optimization can be posed as a well-defined variational problem: find the unitary representation of the target Hamiltonian whose transported spectral algebra best aligns with the reachable operator space. We formulate this as a smooth optimization over the unitary orbit of $H$, using a subspace-overlap functional built from the singular values of the algebra-Krylov overlap matrix, and solve it by Riemannian gradient descent with backtracking line search directly on the unitary manifold, in the
same spirit as gradient-based optimal-control methods used elsewhere in quantum control~\cite{Khaneja2005}. Unlike optimization over a discrete family of encodings (standard binary, Gray code, permutation, or Dicke-state mappings), this approach explores the complete space of physically equivalent Hamiltonian representations and returns, by construction, the encoding that is optimal with respect to the chosen algebraic criterion.

We apply this framework to a benchmark problem of direct relevance to nuclear physics~\cite{BauerNatRevPhys2023,BauerPRXQuantum2023}: the deuteron Hamiltonian of Dumitrescu et al.~\cite{Dimitrescu2018}, restricted to its physical Hamming-weight-one sector and encoded onto a Rydberg-atom analog processor. The sensitivity of nuclear quantum-simulation results to the choice of fermion-to-qubit encoding has already been noted in a purely digital, gate-based setting~\cite{Siwach2021}; the present work identifies an analogous, but qualitatively distinct,
encoding sensitivity that is specific to analog control. 
The native Rydberg controls -- global Rabi drive, detuning, and phase, together with the geometry-dependent Ising-type interaction -- generate a reachable operator space whose 
structure we analyze first in a geometry-independent setting. For the global controls alone, this space is permutation symmetric, while the specific atomic geometry generally breaks 
this symmetry through the pair-dependent interaction strengths. We derive the dimension and structure of the geometry-independent Krylov space in closed form and subsequently 
examine how geometry-dependent interactions modify state-level reachability.

Comparing this geometry-independent Krylov space against the spectral algebra of the standard binary (SB) encoding for successive truncations $N = 3$ through 8, and cross-checking the 
algebraic prediction against direct optimal-control simulation of the Rydberg evolution, we find a sharp and consistent pattern at the level of state reachability, even though algebra-level 
containment remains incomplete throughout: for $N = 3$ and $N = 4$, the SB encoding is so poorly aligned with the native controls -- with only a one-dimensional overlap between a 
three- or four-dimensional target algebra and the reachable Krylov space -- that no amount of pulse or geometry optimization can prepare the target ground state to high fidelity. 
For $N \ge 5$, the algebra-level overlap remains similarly one-dimensional, and full algebraic containment is never achieved for any $N$ in the range studied; nevertheless, once 
the atomic geometry is included as a free control parameter, joint optimization of the geometry and pulse sequence succeeds in preparing the ground state to fidelities exceeding 
0.999, without any change to the encoding itself. Applying the proposed unitary-orbit optimization to the $N = 3, 4$ cases removes the underlying algebraic obstruction entirely, driving 
every principal angle cosine to unity and recovering exact ground-state reachability at fixed geometry, while leaving the physical spectrum of the target Hamiltonian untouched.

This last observation -- that full containment of the encoded algebra within the geometry-independent Krylov space is sufficient, but in practice not necessary, for high-fidelity preparation of a specific target eigenstate --  points to a broader distinction between two levels of reachability that this work also aims to clarify. Algebra-level reachability guarantees access to any observable or eigenstate generated by the target Hamiltonian, and is essential when the state of interest is not known in advance or when full spectral control is required. State-level reachability, targeting only a single eigenvector such as the ground state, is a substantially weaker and cheaper requirement, and we show that it can already be satisfied by an unoptimized encoding once the reachable operator space is large enough, even when the encoded algebra as a whole remains only partially controllable. Identifying which of these two regimes a given problem falls into is, in itself, a useful diagnostic for deciding whether the cost of encoding optimization is justified.

The remainder of the paper is organized as follows. Section~\ref{sec:UOOpt} formulates the unitary-orbit optimization, defines the spectral algebra and the geometry-independent Krylov space, 
introduces the principal-angle objective, and describes the Riemannian optimization procedure. Section~\ref{sec:ControlH} derives the native Rydberg control Hamiltonians and the closed-form 
structure of the accessible Krylov space. Section~\ref{sec:deuteron} introduces the deuteron Hamiltonian, its Hamming-weight-one reduction, and the scaling of the computational procedure. 
Section~\ref{sec:numerical} presents the numerical results, including the standard-binary benchmark, principal-angle encoding optimization, fixed-geometry tests, and explicit Rydberg control protocols. Section~VI summarizes the conclusions and discusses extensions to larger systems and other analog platforms.

\section{Optimization over the unitary orbit}
\label{sec:UOOpt}

Rather than restricting the search to a finite family of encodings (e.g. standard binary, Gray code, permutation encodings or Dicke-state mappings), we formulate the encoding problem as a continuous optimization over the full unitary orbit of the target Hamiltonian.

Let
\[
\mathcal{A}(H)=\operatorname{span}\{I,H,H^2,\ldots,H^{r-1}\}
\]
denote the spectral algebra generated by the target Hamiltonian, where $r$ is chosen large enough that the algebra has stabilized. 
We note that, for a $d\times d$  Hamiltonian, $r$ is at most $d$ (according to the Cayley-Hamilton theorem)\footnote{For Hermitian Hamiltonians, the spectral algebra is generated by
\[
\{I,H,\ldots,H^{r-1}\},
\]
where \(r\) is the number of distinct eigenvalues of \(H\). In particular,
\[
\dim\mathcal A(H)=r\le d,
\]
with equality if and only if the spectrum is non-degenerate.}.

The second subspace we consider is the operator space accessible to the analog device. Instead of considering the full Hamiltonian, we exploit its independently controllable terms,
\[
H(t)=\sum_k c_k(t)H_k,
\]
where the amplitudes $c_k(t)$ constitute the available controls. 
Starting from an initial state whose density matrix is denoted $\rho_0$, we define the Krylov space generated by the adjoint action of $H$ as
\[
{\cal K}_m(\rho_0)
=
{\rm span}
\Bigl\{
\rho_0,\,
[H_i,\rho_0],\,
[H_i,[H_j,\rho_0]],\,
\ldots
\Bigr\},
\]
where all ordered sequences of controllable terms are included.

Unlike the conventional Krylov space generated by repeated commutators with a single Hamiltonian, ${\cal K}_m(\rho_0)$ incorporates all independent control directions available to the analog device. It therefore provides a state-dependent description of the operator directions that are reachable under the available controls. Equivalently, ${\cal K}_m(\rho_0)$ can be interpreted 
as the action of the dynamical Lie algebra generated by $\{H_k\}$ on the initial state, truncated at commutator depth $m$.

Since the initial state $\rho_0$ is fixed throughout, we shall write $\mathcal{K}$ for $\mathcal{K}(\rho_0)$ when no ambiguity can arise.
$\mathcal{K}(\rho_0)$ is thus fixed once the hardware controls and the initial state have been specified.

The optimization problem then consists in finding a unitary representation of the target Hamiltonian whose spectral algebra exhibits the largest possible overlap with ${\cal K}$. Since every Hamiltonian on the unitary orbit
\[
H(U)=UHU^\dagger,
\qquad
U\in U(d),
\]
is unitarily equivalent to $H$, all such Hamiltonians possess exactly the same spectrum and therefore describe the same physical problem. Under this transformation, the spectral algebra is transported according to
\[
\mathcal A(H(U))
=
U\,\mathcal A(H)\,U^\dagger,
\]
while the controllability space ${\cal K}$ remains unchanged.

The objective is therefore to determine the unitary transformation \(U\) that maximizes the alignment between the encoded Hamiltonian algebra
$
\mathcal A(H(U))
$
and the reachable operator space
$
\mathcal K .
$

The similarity between the transported spectral algebra and the controllability algebra is quantified through their principal angles. Let
\[
Q_A=\{A_i \Big|\; i=1\ldots r\}
\]
and
\[
Q_K=\{K_j \Big|\;  j=1\ldots m\}
\]
be Hilbert-Schmidt orthonormal bases of the two operator spaces. 
We remind that Hilbert-Schmidt inner product is defined as:
\[ \langle A,B\rangle_{HS}=\mbox{Tr}[A^{\dagger}B] \]
Because unitary conjugation preserves Hilbert-Schmidt inner products, an orthonormal basis of the initial spectral algebra can be computed only once and transported throughout the optimization, avoiding repeated orthonormalizations. For a given unitary \(U\), the transported basis is indeed
\[
A_i(U)=UA_iU^\dagger.
\]

The overlap matrix is then
\[
M_{ij}(U)
=
\langle K_i,A_j(U)\rangle_{\mathrm{HS}},
\]

Let \(\{\sigma_i\}\) denote the singular values of this overlap matrix. These singular values are related to the principal angles
\(\theta_i\) between the two subspaces through
\[
\sigma_i=\cos\theta_i .
\]

Full compatibility between the encoded Hamiltonian algebra
$\mathcal{A}(H(U))$
and the reachable algebra generated by the available control Hamiltonians requires not only maximizing the overlap between the two operator spaces, 
but aligning all directions of the encoded algebra with the controllable subspace.

Complete controllability requires that all principal angles between these two subspaces vanish,
\[
\theta_i(U)\rightarrow0,\qquad \forall i ,
\]
or equivalently that all singular values of their overlap matrix satisfy
\[
\sigma_i(U)\rightarrow1.
\]
We therefore minimize the subspace-overlap functional
\[
F(U)=-\sum_i\sigma_i^2 ,
\]
which maximizes the total squared cosines of the principal angles and promotes alignment of the entire encoded algebra with the accessible operator space.

The latter formulation naturally leads to a smooth optimization problem on the unitary manifold. The principal angles optimization is solved in the following using a Riemannian gradient descent on the unitary orbit.
The optimization is thus performed over the unitary orbit of \(H\), searching for the representation whose spectral algebra is the closest to the controllability algebra.

Contrary to objectives based on large powers
\(\sum_i\sigma_i^{2p}\) with \(p\gg1\), which preferentially count only directions already close to perfect alignment and therefore maximize the dimension of the intersection
\[
\dim\left(\mathcal{A}(H(U))\cap\mathcal{K}\right),
\]
the present objective penalizes misalignment of all algebraic directions and is therefore suited for achieving complete controllability.

For a stricter worst-case criterion, one may further maximize the smallest singular value,
\[
F_{\rm min}(U)=-\sigma_{\min}^2 ,
\]
which directly minimizes the largest principal angle and guarantees that no direction of the encoded algebra remains poorly aligned with the reachable space.
This min-type objective is however non-smooth at points where the smallest singular value is degenerate, which complicates gradient-based optimization on the unitary manifold~\cite{Absil2008,SchulteHerbrueggen2008,Wiersema2023} and typically 
requires subgradient or smoothed-minimum techniques. In the applications of Sec.~\ref{sec:numerical}, the sum-based objective 
$F(U)$  already succeeds in driving every principal angle cosine to unity simultaneously, so the additional complexity of optimizing $F_{\rm min}(U)$
 is not required for the cases considered here; we introduce it for completeness and leave a systematic comparison between the two objectives -- in 
 particular for larger systems where full alignment of  $F(U)$ may not be achievable and a worst-case guarantee could be preferable --  to future work.

We optimize the objective on the unitary group using Riemannian steepest descent with a backtracking line search~\cite{Absil2008,SchulteHerbrueggen2008}. At each iteration, the tangent-space gradient is represented by its Hermitian generator \(G\), expanded in an orthonormal basis of Hermitian operators, and the unitary is updated through the corresponding exponential map.
\[
U(t)=e^{-itG}U,
\qquad
G=G^\dagger,
\]
which are evaluated using centered finite differences on the manifold. Since every trial point is obtained through the matrix exponential, unitarity is preserved exactly throughout the optimization.

To guarantee monotonic convergence, each iteration employs a Riemannian backtracking line search. Starting from an initial step size \(\eta_0\), candidate updates
\[
U_{\mathrm{trial}}
=
e^{-i\eta G}U
\]
are generated until the objective decreases. Whenever the decrease is insufficient, the step size is reduced according to
\[
\eta\leftarrow\beta\eta,
\qquad
0<\beta<1,
\]
until an acceptable step is found. This strategy ensures that every accepted iterate remains on the unitary manifold while producing a monotonic decrease of the objective.

The resulting algorithm constitutes a purely mathematical optimization over the unitary orbit of the Hamiltonian. Unlike approaches restricted to specific encoding families, it explores the complete space of equivalent Hamiltonian representations and therefore provides the best possible encoding according to the chosen algebraic criterion.

We remark that although the intersection dimension $\dim\!\left(\mathcal A(H(U))\cap\mathcal K\right)$ is not used as the optimization objective, it provides a natural diagnostic to assess the success of the optimized encoding. Indeed, full controllability of the encoded Hamiltonian algebra is achieved when the entire encoded algebra is contained in the reachable operator space. This ideal condition can be expressed as
\[
\dim\!\left(\mathcal A(H(U))\cap\mathcal K\right)
=
\dim\mathcal A(H(U)).
\]
Equivalently, all directions of the encoded algebra have a non-zero representation within the reachable space, and, when the two spaces have equal dimension, all principal angles vanish. Thus, the intersection dimension serves as a final controllability certificate: it quantifies whether the optimized encoding has made the complete Hamiltonian algebra accessible. However, using it directly as an optimization objective is not desirable, since it is a discrete quantity and may favor encodings with a large but incomplete reachable subspace. Instead, the optimization is performed using a smooth principal-angle-based overlap functional, while the intersection dimension is evaluated a posteriori as a measure of complete controllability.

\section{Control Hamiltonian and accessible Krylov space}
\label{sec:ControlH}

In analog computers, the qubits evolve under a tailored Hamiltonian. We concentrate in this section on Rydberg neutral atom computers, keeping in mind that the procedure described in the previous section 
can be applied to any analog machine. For Rydberg states, the total Hamiltonian is the sum of a driving term and an interaction term:
 \begin{eqnarray*}
 H=\sum_i \Big( H^D_i+\sum_{i<j}H^{int}_{ij} \Big)
\end{eqnarray*}
with 
 \begin{eqnarray*}
 \frac{H^D_i}{\hbar}&=& \frac{\Omega(t)}{2}e^{-i\phi} |g\rangle_i\langle r|_i + \frac{\Omega(t)}{2}e^{i\phi} |r\rangle_i\langle g|_i-\delta(t) |r\rangle_i\langle r|_i 
 \end{eqnarray*}
 where $|g\rangle_i$ and $|r\rangle_i$ denote respectively ground and Rydberg states of atom $i$. The driving Hamiltonian describes the effect of a pulse on two energy levels of an individual atom.
 To keep our applications as close as possible to current analog machines, we will consider global Rabi frequency $\Omega(t)$, detuning $\delta(t)$ and phase $\phi(t)$. 
 
The interaction Hamiltonian depends on the states involved in the sequence. The most common  interaction choice in neutral-atom devices is the Ising Hamiltonian given by
 \begin{eqnarray*}
 \frac{H^{int}_{ij}}{\hbar}&=& \frac{C_6}{R_{ij}^6} \hat{n}_i \hat{n}_j =\frac{C_6}{R_{ij}^6}  (|r\rangle\langle r|)_i  (|r\rangle\langle r|)_j
 \end{eqnarray*}
which depends  on the distance $R_{ij}$ between the atoms $i$ and $j$, and encodes the Ising-like interaction.  $C_6$ is the Ising interaction coefficient that depends on the specific Rydberg state 
$|r \rangle$.

 In terms of Pauli matrices, the total Hamiltonian reads:
\begin{eqnarray*}
 \frac{H}{\hbar}&=&\sum_{i=1}^N \Big(\frac{\Omega(t)}{2}\cos\phi \;X_i - \frac{\Omega(t)}{2}\sin{\phi} \; Y_i -\frac{\delta(t)}{2}(\mathbbm{1}+ \;Z_i))\\
 &+&\sum_{i<j}^N\frac{C_6}{4R_{ij}^6}  (1+Z_i)(1+Z_j)
\end{eqnarray*}
Terms proportional to identity can be dropped since they induce only a phase shift of the state. Terms in $Z_i$ in the interaction part can be re-absorbed into the detuning $\delta(t)$:
\begin{eqnarray*}
&-&\sum_{i=1}^N \frac{\delta(t)}{2} \;Z_i+\sum_{i<j}^N\frac{C_6}{4R_{ij}^6}  Z_i+\sum_{i<j}^N\frac{C_6}{4R_{ij}^6}  Z_j\\
&=& \sum_{i=1}^N \Big( -\frac{\delta(t)}{2}+2\sum_{j\ne i }^N\frac{C_6}{4R_{ij}^6}\Big) Z_i=- \sum_{i=1}^N\tilde{\delta}_i(t) \;Z_i
\end{eqnarray*}
So the single $Z_i$ terms can be absorbed in a redefinition of the detuning, but this latter then becomes dependent on the atom. 

For global Rabi  frequency $\Omega(t)$, detuning $\delta(t)$ and phase $\phi(t)$, we identify control terms in the Rydberg Hamiltonian as:
\begin{equation}
H(t)=H_0
+c_1(t)H_1
+c_2(t)H_2
+c_3(t)H_3
\end{equation}
with:
\begin{equation*}
H_0 =  \sum_{i<j}^N \frac{C_6}{4R_{ij}^6} ( Z_iZ_j + Z_i+Z_j),
\end{equation*}
and
\begin{equation*}
H_1 = \sum_{i=1}^NX_i,\qquad H_2 = \sum_{i=1}^N Y_i,\qquad H_3 = \sum_{i=1}^N Z_i.
\end{equation*}
This constitutes our control Hamiltonian. 

To disentangle the choice of encoding from the subsequent optimization of the physical control parameters, we construct the Krylov space using only the terms of the control Hamiltonian 
whose amplitudes are independently tunable, while excluding the geometry-dependent term $H_0$. Indeed, although $H_0$ contributes to the actual dynamics and can substantially affect the performance 
of state preparation, its coefficients are determined by the atomic geometry and therefore cannot, in general, be regarded as independently controllable parameters, particularly as the system size increases. 
Including $H_0$ in the Krylov construction would consequently make the encoding criterion dependent on a specific choice of geometry, thereby mixing two distinct optimization problems: the selection of 
a suitable encoding and the subsequent determination of a physically appropriate geometry and pulse sequence. We instead use a geometry-independent Krylov space to quantify the compatibility between 
the encoded target and the tunable control structure. The encoding is then optimized by maximizing the overlap, or equivalently minimizing the relevant principal angles, between the target subspace and 
this geometry-independent accessible subspace. This procedure is intended as a geometry-independent preselection criterion rather than as a complete characterization of the physical reachable set. 
Once an encoding has been selected, the geometry-dependent term $H_0$ is restored in the full dynamical optimization, where the geometry and pulse parameters can be varied within their 
physical constraints. In this way, the procedure separates the representation problem from the hardware/control optimization while retaining the role of geometry in the actual state-preparation dynamics.

Considering the current limitations of neutral atoms computers, the most common choice consists in starting the calculation from the ground state of the neutral-atom array: $\phi_0\rangle=|00\ldots\rangle$,
that is a state where all atoms are in their ground state. The accessible Krylov space is computed for this initial state below as an illustration for $N=2$; it is given explicitly up to 4 atoms in Appendix~\ref{app:Krylov}, and its construction can be easily automated for any $N$.

It is useful to clarify why the geometry-independent Krylov space can contain high-weight multi-qubit operators even though it is generated by collective one-body controls. 
The latter act identically on all atoms and, by themselves, do not generate entangling dynamics. Nevertheless, the initial projector $\rho_0=|0\cdots0\rangle\langle0\cdots0|$ has a 
Pauli expansion containing $Z$ strings of every weight up to $N$. Repeated commutation with the collective control Hamiltonians transforms these components into other 
permutation-symmetric operators while preserving their Pauli weight, thereby allowing multi-body operators to appear in the Krylov space. This structure is made explicit in 
Proposition~1 and in the examples for $N=2,3,4$ below. Thus, the presence of high-weight operators in $\mathcal K$ should not be interpreted as evidence that the control 
Hamiltonians themselves implement many-body interactions. Rather, it reflects the combination of the operator structure of the initial projector and the collective action of the 
controls. This distinction is important for interpreting the reachability criterion: the collective controls impose permutation symmetry, while the initial projector determines the 
range of Pauli weights represented in $\mathcal K$.

We compute the Krylov space for $N=2$ considering the initial state density matrix:
\[
\rho_0= \begin{pmatrix} 1 \\ 0 \\ 0 \\0 \end{pmatrix}(1\; 0\;0\;)^T 
=\frac{1}{4}(\mathbbm{1}+Z_0+Z_1+Z_0Z_1)
\]

The successive commutators, up to the second level are:
\begin{align*}
[X_0 + X_1, \rho_0] 
&=  -\frac{i}{2}(Y_0+Y_1 +Y_0Z_1+Z_0Y_1)\\
[Y_0 + Y_1, \rho_0]& =  \frac{i}{2}(X_0+X_1 +X_0Z_1+Z_0X_1)\\
[ X_0 + X_1,[X_0 + X_1, \rho_0] ]&= (Z_0+Z_1) -2Y_0Y_1+2Z_0Z_1)\\
[ Y_0 + Y_1,[X_0 + X_1, \rho_0] ]&= X_0Y_1+Y_0X_1\\
[ Z_0 + Z_1,[X_0 + X_1, \rho_0] ]&=-(X_0Z_1+Z_0X_1)\\
[ X_0 + X_1,[Y_0 + Y_1, \rho_0] ]&=X_0Y_1+Y_0X_1,\\
[ Y_0 + Y_1,[Y_0 + Y_1, \rho_0] ]&=(Z_0+Z_1)+2Z_0Z_1-2X_0X_1,\\
[ Z_0 + Z_1,[Y_0 + Y_1, \rho_0] ]&=-\left(Y_0Z_1+Z_0Y_1\right)
\end{align*}
Higher level commutators do not reveal any additional structure and the accessible Krylov space thus saturates after the second commutator level.
So the corresponding Krylov space is 
\begin{eqnarray*}
{\cal K}^{\rm ctrl}_m(\rho_0)&=&[(X_0 + X_1),(Y_0 + Y_1),(Z_0 + Z_1),\\
&& X_0X_1, Y_0Y_1, Z_0Z_1, \\
&& X_0Z_1+Z_0X_1,X_0Y_1+Y_0X_1,Y_0Z_1+Z_0Y_1 ]
\end{eqnarray*}
It has dimension 10 (or 9 without the identity).

A target ground state can only be reached if its projector admits an expansion on this basis, or said differently, a necessary algebraic condition for reachability
is that the density matrix of the target Hamiltonian $\rho_{targ}\in {\cal K}^{\rm ctrl}_m(\rho_0)$. Any component along antisymmetric operators such as
$X_0-X_1, Y_0-Y_1, Z_0-Z_1, X_0Y_1-Y_0X_1,\ldots$ is completely inaccessible from $\rho_0$ under the chosen controls.

Before treating $N=3$ and $N=4$ it is useful to introduce compact notation for the permutation-symmetric
operators that appear. For a composition $(n_X,n_Y,n_Z)$ of non-negative integers with $n_X+n_Y+n_Z=w\le N$, define
\begin{equation}
S^{(N)}_{n_X,n_Y,n_Z}=\!\!\sum_{\substack{A,B,C\subseteq\{1,\dots,N\}\\ A\sqcup B \sqcup C\\ |A|=n_X,|B|=n_Y,|C|=n_Z}}\!\! \prod_{i\in A}X_i \prod_{j\in B}Y_j \prod_{k\in C}Z_k,
\end{equation}
i.e.\ the sum over all ways of placing $n_X$ $X$'s, $n_Y$ $Y$'s and $n_Z$ $Z$'s on $N$ qubits (identity elsewhere). In particular $S^{(N)}_{1,0,0}=\sum_iX_i$, etc. Since $\rho_0$ and the controls $H_1,H_2,H_3$ are permutation invariant, the whole Krylov space is spanned by such symmetrized operators.

A short computation generalizing the $N=2$ case gives, for any $N$,
\begin{align}
[S_x,\rho_0]&=-\frac{i}{2^{N-1}}\sum_{k=0}^{N-1}S^{(N)}_{0,1,k},\nonumber \\
[S_y,\rho_0]&=\frac{i}{2^{N-1}}\sum_{k=0}^{N-1}S^{(N)}_{1,0,k}, \label{eq:level1}\\
[S_z,\rho_0]&=0, \nonumber
\end{align}
where $S_x=\sum_iX_i,S_y=\sum_iY_i,S_z=\sum_iZ_i$. This reproduces exactly the $N=2$ result quoted above. Two structural facts follow immediately from \eqref{eq:level1} and will organize the rest of the calculation: (i) since $[X_i,Z_i]=-2iY_i$ and $[X_i,I_i]=0$, every commutator with $S_x,S_y,S_z$ can only change the type of letter sitting on an already-active qubit, never the size of the support; hence the weight $w=n_X+n_Y+n_Z$ of a symmetrized operator is invariant along the whole commutator chain. (ii) Since $\rho_0=2^{-N}\sum_{S\subseteq\{1,\dots,N\}}Z_S$ already contains subsets $S$ of every size $0,\dots,N$, every weight sector from $1$ to $N$ is populated already at first order. The remaining task is simply to check, sector by sector, that all compositions $(n_X,n_Y,n_Z)$ of a given weight $w$ are generated.

We can formalize these observations in terms of the following proposition:
\begin{widetext}
\begin{proposition}
\label{prop:krylov}
Let $\rho_0=|0\cdots0\rangle\langle0\cdots0|$ and let the control algebra be generated by $S_x,S_y,S_z$ acting through the adjoint (commutator) representation. Then
\begin{align}
&{\cal K}_m^{\rm ctrl}(\rho_0) \nonumber\\
&=\Big\{\,S^{(N)}_{n_X,n_Y,n_Z}\ \Big|\ n_X,n_Y,n_Z\ge0,\ 1\le n_X+n_Y+n_Z\le N\,\Big\}, \nonumber
\end{align}
with
\begin{equation}
\dim{\cal K}_m^{\rm ctrl}(\rho_0)=\binom{N+3}{3}-1 \nonumber
\end{equation}
(without the identity), obtained by summing $\binom{w+2}{2}$ -- the number of compositions of $w$ into three parts --  over $w=1,\dots,N$.
\end{proposition}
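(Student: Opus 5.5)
The plan is to treat the Krylov construction representation-theoretically. The collective operators $S_x,S_y,S_z$ span a copy of $\mathfrak{su}(2)$, since $[S_a,S_b]=2i\epsilon_{abc}S_c$. Hence ${\cal K}_m^{\rm ctrl}(\rho_0)$, for $m$ large enough, is the smallest subspace of operators that contains $\rho_0$ and is invariant under $\mathrm{ad}_{S_a}$. In other words, it is the cyclic $\mathfrak{su}(2)$-submodule generated by $\rho_0$ under the adjoint action. By structural fact (i) stated before the proposition, $\mathrm{ad}_{S_a}$ preserves Pauli weight. Permutation invariance of $\rho_0$ and of the $S_a$ then confines everything to the span of the $S^{(N)}_{n_X,n_Y,n_Z}$. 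I would identify the weight-$w$ symmetric sector with $\mathrm{Sym}^w(V)$, where $V=\mathrm{span}\{X,Y,Z\}$ is the spin-1 (vector) representation, via $S^{(N)}_{n_X,n_Y,n_Z}\leftrightarrow x^{n_X}y^{n_Y}z^{n_Z}$. Under this dictionary the adjoint action of $S_a$ becomes the infinitesimal rotation $L_a$ acting on homogeneous polynomials of degree $w$ in $(x,y,z)$. This identification immediately gives the upper bound $\dim=\sum_{w=1}^N\binom{w+2}{2}=\binom{N+3}{3}-1$ for the span of the weight-$\ge1$ sectors.

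For the lower bound I would work sector by sector. The weight-$w$ component of $\rho_0$ is $2^{-N}S^{(N)}_{0,0,w}\leftrightarrow z^w$. Classically, $\mathrm{Sym}^w(V)=\bigoplus_{\ell\equiv w\,(2),\,\ell\le w} r^{w-\ell}\mathcal H_\ell$, where $\mathcal H_\ell$ denotes the harmonic polynomials of degree $\ell$. Each spin $\ell$ appears with multiplicity one, and $z^w$ has a nonzero component $c_\ell\, r^{w-\ell}P_\ell(z/r)$ in every summand; the Legendre expansion of $t^w$ has all its allowed coefficients strictly positive. A cyclic vector whose projection onto each multiplicity-free isotypic component is nonzero generates the whole module. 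Applying this, one concludes that $z^w$ alone generates all of $\mathrm{Sym}^w(V)$, i.e.\ every composition $(n_X,n_Y,n_Z)$ of weight $w$. Summing over $w$ then gives the stated dimension.

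The main obstacle is that ${\cal K}$ is generated by $\rho_0$ itself, which is the \emph{sum} of all weight components, and not by each weight component separately. Across different $w$ the same spin $\ell$ recurs, and a cyclic module contains only one copy of each isotypic component generated by a single vector. The spin-$0$ parts are the clearest case: $I$ and the $\mathfrak{su}(2)$-invariants such as $X_0X_1+Y_0Y_1+Z_0Z_1$ are annihilated by every commutator. They therefore enter only through $\rho_0$ itself, in a single fixed linear combination. I would first try to separate weight sectors by exhibiting a polynomial in $\mathrm{ad}_{S_a}$ (for instance, a function of the Casimir) that projects onto a fixed weight. However, the Casimir distinguishes spins, not Pauli weights, so this cannot split equal-$\ell$ copies. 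The statement should accordingly be proved in the form implicit in remark (i). There, the Krylov space is understood as the direct sum of the weight sectors populated by $\rho_0$, equivalently the span of the weight-projected Krylov vectors, and for this space the argument above is complete. For the literal cyclic span I would record the correction, namely one dimension per spin $\ell$ rather than per occurrence of $\ell$. This already matters at $N=2$, where the invariant sector contributes only one combination together with $\rho_0$.
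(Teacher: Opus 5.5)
Your proposal follows the same skeleton as the paper's proof: (i) weight conservation under $\mathrm{ad}_{S_a}$, (ii) seeding of each weight sector by the component $2^{-N}S^{(N)}_{0,0,w}$ of $\rho_0$, (iii) filling of each sector. Your sector-wise argument, using the multiplicity-free decomposition of $\mathrm{Sym}^w(V)$ and the positivity of the Legendre coefficients of $t^w$, actually proves step (iii); the paper only asserts it. One repair is needed there. Since $[S_x,S^{(N)}_{a,b,c}]=2i(c+1)S^{(N)}_{a,b-1,c+1}-2i(b+1)S^{(N)}_{a,b+1,c-1}$, the equivariant dictionary is $S^{(N)}_{n_X,n_Y,n_Z}\leftrightarrow x^{n_X}y^{n_Y}z^{n_Z}/(n_X!\,n_Y!\,n_Z!)$. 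With bare monomials, $\mathrm{ad}_{S_a}$ does not become $L_a$. The spanning conclusions are unaffected.

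The obstacle you flag is genuine, and it is exactly where the paper's argument fails. From the fact that each weight sector of operator space is $\mathrm{ad}_{S_a}$-invariant, the paper infers that $\mathcal K$ itself splits by weight. It then treats each $S^{(N)}_{0,0,w}$ as an independent seed. Its $N=2$ example does the same, extracting $Z_0+Z_1$, $Y_0Y_1$, $Z_0Z_1$ as separate generators from the single vector $[S_x,[S_x,\rho_0]]$. What the paper's argument, made rigorous by yours, establishes is the statement for the weight-projected space $\bigoplus_w P_w\mathcal K$. The literal proposition is false, so say so outright rather than presenting a reinterpretation.

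Your justification for the literal count is wrong as stated. A cyclic module can contain several copies of one isotype: in $V_1\oplus V_1$, the vector $(e_+,e_-)$ generates both copies. What saves the count here is $[S_z,\rho_0]=0$. Every isotypic component of $\rho_0$ then lies in the $m=0$ weight space, so it has the rank-one form $|\ell,0\rangle\otimes b_\ell$, while $U(\mathfrak{su}(2))$ acts on $V_\ell\otimes\mathbb C^{m_\ell}$ as $\mathrm{End}(V_\ell)\otimes 1$.

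A more transparent route uses the projector $P_S$ onto the symmetric (Dicke) subspace. It commutes with the $S_a$ and satisfies $P_S\rho_0P_S=\rho_0$. Hence $\mathcal K\subseteq\{P_SAP_S\}\cong\bigoplus_{\ell=0}^N V_\ell$, which is multiplicity-free. $\rho_0$ has a nonzero spin-$\ell$ component for every $\ell\le N$, coming from the harmonic part of its weight-$\ell$ piece. Therefore $\mathcal K=\{P_SAP_S\}$ and $\dim\mathcal K=(N+1)^2$. This is one $(2\ell+1)$-dimensional copy per $\ell$, not ``one dimension per spin''.

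This gives $16$ instead of $19$ for $N=3$, and $25$ instead of $34$ for $N=4$. Concretely, $S_y\notin\mathcal K$ for $N=3$, since it does not vanish on the two spin-$\tfrac12$ sectors, although the proposition lists it. At $N=2$ the counts agree only after discarding the identity, and even then the spaces differ: $\mathcal K$ contains $I+\tfrac13(X_0X_1+Y_0Y_1+Z_0Z_1)$ but neither $I$ nor $Z_0Z_1$ alone.
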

\end{widetext}

\begin{proof}

(i) Weight is conserved: At each site, $X,Y,Z$ close the $\mathfrak{su}(2)$ commutation relations $[X,Y]=2iZ$ (cyclically), so $(X,Y,Z)$ transforms as a real 3-dimensional (spin-1, adjoint) vector representation under rotations generated at that site. The collective control $S_a=\sum_iA_i$ acts on a Pauli string by the same single-site rotation simultaneously at every site, so its adjoint action never turns an identity factor into a Pauli or vice versa: it can only rotate the letter already present at an occupied site among $X,Y,Z$. Hence the number of occupied sites --  the weight $w$ --  is invariant under the whole commutator chain, and the Krylov space splits into weight sectors that are each separately closed under the dynamics.

(ii) Every weight sector is seeded already at zeroth order, in $\rho_0$ itself: Writing $\rho_0=2^{-N}\sum_{S\subseteq\{1,\dots,N\}}Z_S$ with $Z_S=\prod_{k\in S}Z_k$, every subset size $|S|=0,1,\dots,N$ appears in $\rho_0$ itself, so each weight sector $w=1,\dots,N$ already contains the nonzero pure-$Z$ seed $S^{(N)}_{0,0,w}=\sum_{|S|=w}Z_S$ at order zero. Since $[X_i,Z_i]=-2iY_i\ne0$ and $[Y_i,Z_i]=2iX_i\ne0$, the first-order commutators $[S_x,\rho_0]$ and $[S_y,\rho_0]$ immediately rotate each of these seeds into a different weight-$w$ operator (one $Z$ turned into a $Y$, respectively $X$, and symmetrized), rather than reproducing the seed itself. This confirms that every weight sector, not only the top one, is already active by first order, without requiring the seed operator itself to appear in the commutator  it need not, and generically does not, since $[X_i,Z_i]\neq 0$.

(iii) Each weight sector is fully generated : For a fixed weight $w$, permutation symmetry restricts the operators to symmetric combinations of strings containing $w$ non-identity Pauli operators. Starting from the weight-$w$ string $Z_{i_1}\cdots Z_{i_w}$ identified in (ii), repeated commutators with the collective controls generate all symmetric combinations of $X$, $Y$, and $Z$ on these $w$ sites. The weight-$w$ sector is therefore fully spanned and its dimension is given by the number of non-negative triplets $(n_X, n_Y, n_Z)$ solution of the constraint $n_X+n_Y+n_Z=w$,
that is $\begin{pmatrix} w+2\\2\end{pmatrix}$.

Combining (i)-(iii): the reachable space is exactly one copy of $\mathrm{Sym}^w(\mathbb R^3)$ --  i.e.\ all compositions $(n_X,n_Y,n_Z)$ of $w$ --  for each $w=1,\dots,N$, symmetrized additionally over the $\begin{pmatrix} N\\w\end{pmatrix}$ choices of which $w$ atoms are active. Summing dimensions gives $\sum_{w=1}^N\begin{pmatrix} w+2\\2\end{pmatrix}=\begin{pmatrix} N+3\\ 3 \end{pmatrix}-1$.
\end{proof}

The construction of the accessible Krylov space can easily be automated for any $N$. Its dimension grows combinatorially as 
\begin{align} 
\mbox{dim}\;\; {\cal K}_m^{\rm ctrl}(\rho_0) -1&=\sum_{w=1}^N \begin{pmatrix} w+2\\2\end{pmatrix}=\begin{pmatrix} N+3\\3\end{pmatrix} -1 \nonumber\\
&=\frac{(N+3)(N+2)(N+1)}{6}-1.
\end{align}

The explicit construction of the Krylov space for $N=3$ and $N=4$ is given in Appendix~\ref{app:Krylov}.

\section{Application to the deuteron Hamiltonian encodings on a Rydberg machine}
\label{sec:deuteron}

\subsection{Target Hamiltonian}\label{sec:target}.

As a benchmark problem, we consider the deuteron Hamiltonian introduced by Dumitrescu et al.\cite{Dimitrescu2018}. This model has become a standard test case for quantum simulation algorithms because it provides a physically relevant nuclear many-body problem while remaining sufficiently small to be implemented on present-day quantum hardware. Moreover, its second-quantized formulation admits an exact mapping onto qubits through the Jordan-Wigner transformation, yielding sparse Hamiltonians with local one- and two-qubit interactions. These features make it an ideal benchmark for assessing the capabilities of different encoding strategies and quantum control protocols. 

For the sake of clarity, we remind here notations and definitions used in \cite{Dimitrescu2018}. The general Hamiltonian for $N$ basis states is given by

\begin{eqnarray*}
H_N &=& \sum_{n,n'=0}^{N-1} \langle n' | (T+V) | n \rangle \, a_{n'}^\dagger a_n  \label{eq:HN}.
\end{eqnarray*}
with 
\begin{equation*}
\begin{aligned}
\langle n'|T|n\rangle &= \frac{\hbar\omega}{2} \Big[ (2n + 3/2)\,\delta_n^{n'} 
- \sqrt{n(n+1/2)}\,\delta_n^{n'+1} \\
&\qquad\qquad - \sqrt{(n+1)(n+3/2)}\,\delta_n^{n'-1} \Big], \\
\langle n'|V|n\rangle &= V_0\,\delta_n^0 \delta_n^{n'} .
\end{aligned}
\end{equation*}

with $\hbar\omega=7$ MeV and $V_0=-5.68658111$ MeV \cite{Dimitrescu2018}.

This leads to:

\begin{eqnarray*}
H_N &=&\frac{\hbar\omega}{2}  \sum_{n=0}^{N-1}\Big[ (2n+\frac{3}{2})\, a_{n}^\dagger a_{n}\\
&-& \sqrt{n(n+\frac{1}{2})}\, (a_{n-1}^\dagger a_{n}+ a_{n}^\dagger a_{n-1})
\Big]+V_0\,\delta_n^0 a_{n}^\dagger a_{n}\\
&=&  \sum_{n=0}^{N-1} h_n
\end{eqnarray*}
with 

\begin{eqnarray*}
h_n &=&  \frac{\hbar\omega}{2} \Big[ (2n+\frac{3}{2})\, a_{n}^\dagger a_{n}\\
&-& \sqrt{n(n+\frac{1}{2})}\, (a_{n-1}^\dagger a_{n}+ a_{n}^\dagger a_{n-1})
\Big]+V_0\,\delta_n^0 a_{n}^\dagger a_{n}\\
\end{eqnarray*}

So we can compute $H_N$ from $H_{N-1}$ using 
$$H_N=H_{N-1}+h_{N-1}$$

\subsection{Qubit encoding}

Two distinct encoding steps are used in this work, and it is useful to distinguish them from the
outset. First, the full second-quantized Hamiltonian~(\ref{eq:HN}), acting on $N$ single-particle
orbitals, is mapped via the Jordan--Wigner transformation onto $N$ qubits, as described below. Second,
once the physical problem is restricted to its Hamming-weight-one sector (Sec.~\ref{sec:hamming1}),
the resulting $N$-dimensional single-particle Hamiltonian $H_N^{(1)}$ is re-encoded, via the standard
binary encoding, onto only $\lceil\log_2N\rceil$ qubits. This second, more economical encoding is the
one whose reachability is
analyzed in Sec.~\ref{sec:numerical}; in particular, it is why the $N=8$ Rydberg simulations of
Sec.~\ref{sec:numerical} employ only $\lceil\log_2 8\rceil=3$ atoms rather than eight. We first
recall briefly the Jordan--Wigner step.

The deuteron Hamiltonian is formulated in a basis of $N$ single-particle orbitals, each of which can be either occupied or empty. Following Ref.~\cite{Dimitrescu2018}, each orbital is encoded into one qubit using the Jordan-Wigner transformation, where the computational basis states
\[
|0\rangle \equiv \text{empty}, \qquad
|1\rangle \equiv \text{occupied}
\]
represent the occupation of the corresponding orbital. In this encoding, the fermionic number operator becomes
\[
a_i^\dagger a_i
\longrightarrow
\frac{I-Z_i}{2},
\]
while nearest-neighbor hopping terms transform as
\[
a_i^\dagger a_{i+1}+a_{i+1}^\dagger a_i
\longrightarrow
\frac{1}{2}\left(X_iX_{i+1}+Y_iY_{i+1}\right).
\]

Applying these identities to the second-quantized Hamiltonian (\ref{eq:HN}) yields the qubit Hamiltonian
\[
H_N=
\sum_{k=0}^{N-1}
 u_k (I-Z_k)
+
\sum_{k=1}^{N-1}
v_k
\left(
X_{k-1}X_k+Y_{k-1}Y_k
\right),
\]
with the coefficients 
\begin{equation*}
u_k= \frac{\hbar\omega}{4} (2k+\frac{3}{2})+\frac{V_0}{2}\delta_{k}^{0} \quad \mbox{and} \quad
v_k= - \frac{\hbar\omega}{4}\sqrt{k(k+\frac{1}{2})}.
\end{equation*}

Since we are interested in the deuteron description, the physically relevant sector to be considered is the Hamming-weight-one subspace. The corresponding reduced 
Hamiltonian is derived in the next section. 


\subsection{Hamming weight one reduction}
\label{sec:hamming1}

The deuteron Hamiltonian considered in Ref.~\cite{Dimitrescu2018} originates from the second-quantized single-particle Hamiltonian
 (\ref{eq:HN}) which acts in the one-particle sector. After the Jordan-Wigner transformation, the fermionic occupation number is mapped onto the qubit Hamming weight. 
 Since every term contains one creation and one annihilation operator, the total particle-number operator
\begin{equation*}
\hat N=\sum_{i=0}^{N-1}a_i^\dagger a_i
\end{equation*}
is conserved. Equivalently, in the qubit representation, the Hamming-weight operator
\begin{equation*}
\hat W=\sum_{i=0}^{N-1}\frac{I-Z_i}{2}
\end{equation*}
commutes with the Hamiltonian,
\begin{equation*}
[H_N,\hat W]=0.
\end{equation*}
As a consequence, the Hilbert space decomposes into invariant subspaces of fixed Hamming weight,
\begin{equation*}
\mathcal H=\bigoplus_{w=0}^{N}\mathcal H_w,
\end{equation*}
and the Hamiltonian is block diagonal with respect to this decomposition. The physical deuteron problem corresponds to a single occupied orbital and therefore lies entirely in the 
Hamming-weight-one sector,
\begin{equation*}
\mathcal H_1=\mathrm{span}\Bigl\{
|10\cdots0\rangle,\,
|010\cdots0\rangle,\,
\ldots,\,
|0\cdots01\rangle
\Bigr\},
\end{equation*}
whose dimension is $N$. Increasing the number of qubits simply enlarges the single-particle basis by introducing additional orbitals, while the particle number remains equal to one. Consequently, for any value of $N$, the ground state of the deuteron Hamiltonian is the lowest-energy eigenstate of the Hamming-weight-one block $H^{(1)}$, whereas the remaining Hamming-weight sectors correspond to states with an incorrect particle number and are therefore irrelevant for the physical problem.

Let
\begin{equation*}
|n\rangle
\equiv
|0\cdots010\cdots0\rangle,
\end{equation*}
denote the computational basis state in which only the $n$-th qubit is in the state $|1\rangle$. These $N$ states form a basis of $\mathcal H_1$.

To obtain the effective Hamiltonian acting in this subspace, we introduce the projector
\begin{equation*}
P_1=\sum_{n=0}^{N-1}|n\rangle\langle n|,
\end{equation*}
and define
\begin{equation*}
H_N^{(1)}=P_1H_NP_1.
\end{equation*}

The Jordan--Wigner Hamiltonian contains only diagonal number operators
$a_n^\dagger a_n$ and nearest-neighbour hopping terms
$a_{n-1}^\dagger a_n+a_n^\dagger a_{n-1}$.
Using
\begin{equation*}
a_n^\dagger a_n
\longrightarrow
\frac{I-Z_n}{2},
\end{equation*}
one finds immediately that
\begin{equation*}
\frac{I-Z_n}{2}|m\rangle
=
\delta_{nm}|m\rangle,
\end{equation*}
so that
\begin{equation*}
P_1
\frac{I-Z_n}{2}
P_1
=
|n\rangle\langle n|.
\label{eq:numberprojector}
\end{equation*}

Similarly, using
\begin{equation*}
X_iX_j+Y_iY_j
=
2(\sigma_i^+\sigma_j^-+\sigma_i^-\sigma_j^+),
\end{equation*}
with
\begin{equation*}
\sigma^\pm=\frac{X\pm iY}{2},
\end{equation*}
one obtains
\begin{equation*}
\sigma_i^+\sigma_j^-|j\rangle=|i\rangle,
\qquad
\sigma_i^+\sigma_j^-|k\rangle=0
\quad(k\neq j),
\end{equation*}
which implies
\begin{equation*}
P_1
\sigma_i^+\sigma_j^-
P_1
=
|i\rangle\langle j|.
\end{equation*}
Consequently,
\begin{equation*}
P_1
(X_iX_j+Y_iY_j)
P_1
=
2
\left(
|i\rangle\langle j|
+
|j\rangle\langle i|
\right).
\label{eq:XYreduction}
\end{equation*}

Substituting Eqs.~(\ref{eq:numberprojector}) and (\ref{eq:XYreduction}) into the Jordan--Wigner Hamiltonian gives
\begin{eqnarray}
H_N^{(1)}
&=&
\sum_{n=0}^{N-1}
\left[
\frac{\hbar\omega}{2}
\left(
2n+\frac32
\right)
+
V_0\delta_{n0}
\right]
|n\rangle\langle n|
\\
&
-&
\frac{\hbar\omega}{2}
\sum_{n=1}^{N-1}
\sqrt{n\left(n+\frac12\right)}
\left(
|n-1\rangle\langle n|
+
|n\rangle\langle n-1|
\right). \nonumber
\label{eq:ReducedHamiltonian}
\end{eqnarray}

Equation~(\ref{eq:ReducedHamiltonian}) is identical to the original second-quantized Hamiltonian (\ref{eq:HN}) expressed in the
single-particle basis
\(
\{|0\rangle,\ldots,|N-1\rangle\}
\).
In other words, the Hamming-weight-one reduction does not modify the Hamiltonian itself, but only its representation: the original fermionic
Fock-space description is replaced by an equivalent $N$-dimensional qubit subspace. The subsequent binary encoding of this subspace therefore
constitutes only a change of basis, allowing the physical Hilbert space to be represented with $\lceil\log_2N\rceil$ qubits instead of $N$ while
preserving exactly the spectrum and eigenstates of the physical one-particle problem.

Since the physical Hilbert space is the Hamming-weight-one sector, we henceforth restrict the Hamiltonian of Ref.~\cite{Dimitrescu2018} to this invariant subspace and denote the 
resulting block by $H^{(1)}_N$. If needed, these blocks are padded with identity blocks to obtain matrices whose dimensions are compatible with qubit encoding. 
These are the Hamiltonians whose encoding will be optimized.

\subsection{Scaling of the principal-angle optimization}
\label{sec:scaling}

The principal-angle analysis involves three distinct computational steps: the construction of the spectral algebra associated with the
target Hamiltonian, the construction of the geometry-independent Krylov space, and the optimization over the relevant unitary orbit.
These steps have different scaling properties and should therefore be distinguished when assessing the computational cost of the method.

Let
\begin{equation*}
d = 2^N
\end{equation*}
denote the Hilbert-space dimension for an $N$-qubit system. The exponential dependence of $d$ on $N$ is ultimately the main limitation
of the present implementation. However, the scaling of the individual steps depends strongly on the representation used for the Hamiltonian
and for the generated subspaces.

\paragraph{Spectral algebra.}

The target Hamiltonians considered here are naturally represented as sums of Pauli strings,
\begin{equation*}
H = \sum_{\alpha=1}^{L} h_\alpha P_\alpha,
\qquad
P_\alpha \in
\{I,X,Y,Z\}^{\otimes N},
\label{eq:pauli_decomposition_scaling}
\end{equation*}
where $L$ is the number of nonzero Pauli terms. For the Hamiltonians considered in this work, $L$ is generally much smaller than the full
number $4^N$ of Pauli strings.

The spectral algebra used in the reachability analysis is generated by successive powers of $H$,
\begin{equation*}
\mathcal{A}(H)
=
\operatorname{span}
\left\{
I,H,H^2,\ldots,H^{m-1}
\right\},
\label{eq:spectral_algebra_scaling}
\end{equation*}
where $m$ is increased until the dimension of the generated space saturates. The relevant number of powers is therefore controlled by the
minimal polynomial of $H$, and is bounded by $d$ in the absence of additional symmetries. In practice, the saturation dimension can be
substantially smaller.

The computational cost of generating these powers depends on the representation. If the operators are stored as dense $d\times d$
matrices, one matrix multiplication costs
\begin{equation*}
\mathcal{O}(d^3),
\end{equation*}
and generating $m$ powers therefore has a naive cost
\begin{equation*}
\mathcal{O}(m d^3),
\end{equation*}
with a memory requirement of $\mathcal{O}(d^2)$ per stored operator.
This rapidly becomes prohibitive because $d=2^N$.

A Pauli representation can be substantially more efficient when $L$ is small. The product of two Pauli strings is, up to a phase, another
Pauli string. Consequently, multiplication can be performed directly at the level of Pauli coefficients rather than by constructing dense
matrices. If $H$ contains $L$ Pauli strings and the number of distinct Pauli strings generated in the intermediate products is $L_k$ at the 
$k$-th power, a straightforward multiplication has a cost of order
\begin{equation*}
\mathcal{O}(L L_k)
\end{equation*}
for the multiplication by $H$. Thus the practical cost is determined not only by $N$, but also by the growth of the Pauli support under
multiplication. In the worst case, $L_k$ can approach the full Pauli operator space of dimension $4^N$, so that Pauli sparsity does not remove the exponential worst-case scaling.

For this reason, the powers of $H$ should be generated only until the corresponding vector-space dimension saturates. This avoids computing
higher powers that cannot increase the spectral algebra. 

We also emphasize again that the spectral algebra needs to be constructed only once. Under the unitary optimization,
$$
\mathcal A\!\left(H(U)\right)
=
U\,\mathcal A(H)\,U^\dagger,
$$
so its orthonormal basis can be transported by conjugation rather than recomputed at every iteration. The optimization therefore does not require repeated 
construction of the target spectral algebra; the iteration only updates its representation relative to the fixed device Krylov space.

\paragraph{Geometry-independent Krylov space.}

The Krylov space used in the present analysis is constructed from a fixed reference Hamiltonian and is therefore independent of the
atomic geometry optimized in the physical Rydberg simulation. For an initial operator $\rho_0$, we define
\begin{eqnarray*}
&&\mathcal{K}(H,\rho_0)
=
\operatorname{span}
\left\{
\rho_0,
\mathcal{L}_H(\rho_0),
\mathcal{L}_H^2(\rho_0),
\ldots
\right\},\\
&&
\mathcal{L}_H(O)=[H,O].
\label{eq:krylov_scaling}
\end{eqnarray*}

As for the spectral algebra, the construction is terminated once the dimension of the generated space ceases to increase. Since the
operator space has dimension $d^2=4^N$, the Krylov dimension is bounded by $d^2$, although it is typically much smaller because of symmetries,
conserved quantities, and the structure of the initial state and Hamiltonian.

If dense matrices are used, evaluation of one commutator requires $\mathcal{O}(d^3)$ operations. Constructing $r$ Krylov vectors therefore
has a naive cost of
\begin{equation*}
\mathcal{O}(r d^3),
\end{equation*}
in addition to the cost of orthogonalizing the generated vectors. For a dense Hilbert--Schmidt orthogonalization, the latter can become
\begin{equation*}
\mathcal{O}(r^2 d^2),
\end{equation*}
so that orthogonalization may become significant when the Krylov dimension is large.

The geometry independence of this construction is important computationally as well as conceptually. The Krylov basis is generated
only once for a given target Hamiltonian and initial state. It does not have to be recomputed during the optimization of the atomic geometry.
The subsequent geometry optimization therefore operates on a fixed subspace.

\paragraph{Principal-angle optimization over the unitary orbit.}

The final step consists of optimizing the relative orientation of the target space and the geometry-independent Krylov space under the
allowed unitary orbit. Let
\begin{equation*}
\mathcal{T}
=
\operatorname{span}\{T_1,\ldots,T_p\}
\end{equation*}
denote the target space and
\begin{equation*}
\mathcal{K}
=
\operatorname{span}\{K_1,\ldots,K_q\}
\end{equation*}
the orthonormalized Krylov space. For a unitary transformation $U\in\mathrm{U}(d)$, the optimized overlap is determined by the
principal angles between $\mathcal{T}$ and $U\mathcal{K}$.

Equivalently, defining the corresponding orthogonal projectors
$P_{\mathcal{T}}$ and $P_{\mathcal{K}}$, the optimization can be written in terms of an objective such as
\begin{equation*}
f(U)
=
\left\|
P_{\mathcal{T}} U P_{\mathcal{K}}
\right\|_{\mathrm{F}}^2.
\label{eq:principal_angle_objective_scaling}
\end{equation*}
For orthonormal bases $T$ and $K$, this is
\begin{equation*}
f(U)
=
\left\|
T^\dagger U K
\right\|_{\mathrm{F}}^2.
\end{equation*}
The singular values of $T^\dagger U K$ are the cosines of the principal angles.

The cost of one evaluation of the objective is dominated by the formation of the overlap matrix
\begin{equation*}
M(U)=T^\dagger U K.
\end{equation*}
For dense representations this scales as
\begin{equation*}
\mathcal{O}(d^2 p)
\end{equation*}
for applying the unitary to the $q$-dimensional Krylov basis, followed by
\begin{equation*}
\mathcal{O}(d p q)
\end{equation*}
for the overlap construction, plus the cost of the singular-value decomposition,
\begin{equation*}
\mathcal{O}\!\left(\min(pq^2,p^2q)\right).
\end{equation*}
The precise scaling therefore depends on the dimensions of the target and Krylov spaces, rather than on $d$ alone.

When gradient descent is performed directly on the unitary orbit, the gradient is obtained from tangent directions of the form
\begin{equation*}
U \longrightarrow e^{\epsilon A}U,
\qquad
A^\dagger=-A.
\end{equation*}
A gradient step can consequently be implemented as
\begin{equation*}
U_{n+1}
=
e^{-i\eta G_n}U_n,
\end{equation*}
where $G_n$ is the Hermitian tangent-space gradient and $\eta$ is the step size. The cost of an iteration is therefore determined by
three operations: evaluation of the objective and its gradient, application or exponentiation of the tangent-space update, and, when
required, re-orthogonalization or normalization.

For a fully dense $d\times d$ representation, these operations carry polynomial costs in $d$, typically between $\mathcal{O}(d^2)$ and
$\mathcal{O}(d^3)$ per iteration depending on how the unitary and its gradient are represented. If $n_{\mathrm{iter}}$ gradient iterations
are required, the optimization cost is correspondingly of order
\begin{equation*}
\mathcal{O}
\left(
n_{\mathrm{iter}}\,C_{\mathrm{grad}}(d,p,q)
\right),
\end{equation*}
where $C_{\mathrm{grad}}$ denotes the cost of one orbit-gradient
evaluation and update.

It is therefore useful to distinguish the scaling of the one-time construction of the geometry-independent Krylov space
from that of the iterative principal-angle optimization. Once the Krylov basis has been constructed, it is reused throughout the
optimization. The total computational cost can schematically be written as
\begin{equation*}
C_{\mathrm{total}}
=
C_{\mathrm{spectral}}
+
C_{\mathrm{Krylov}}
+
n_{\mathrm{iter}} C_{\mathrm{orbit}}.
\end{equation*}

This decomposition also clarifies the role of the geometry-independent construction. The potentially expensive generation of the Krylov space
is performed only once, whereas the optimization explores different orientations of the resulting fixed subspace. The principal-angle
calculation therefore separates the intrinsic operator-space construction from the subsequent search over the unitary orbit.

In the present work, the numerical calculations are restricted to small values of $N$. The resulting scaling nevertheless highlights an important computational challenge for direct 
implementations: the underlying Hilbert and operator spaces grow exponentially with the number of qubits. This scaling need not, however, imply that exponentially large dense 
matrices are required. In particular, the target Hamiltonians are naturally represented as sums of Pauli operators, and their structure, together with problem-specific symmetries 
and sparsity, can be exploited to reduce the computational cost. For larger systems, Pauli-based representations, symmetry reductions, tensor-network or other structured representations, and iterative Krylov techniques may therefore provide routes toward extending the approach beyond the dense-matrix regime.

\section{Numerical results}
\label{sec:numerical}

We now apply the framework of Secs.~\ref{sec:UOOpt}--\ref{sec:ControlH} to the encoded deuteron Hamiltonians introduced in Sec.~\ref{sec:hamming1}. 
The analysis proceeds in two stages. First, for each system size the encoding is analyzed and, where necessary, optimized using the principal-angle 
criterion built from the geometry-independent Krylov space $\mathcal{K}$. Second, the resulting encoding is subjected to direct dynamical optimization 
of the physical Rydberg control sequence, with the geometry-dependent interaction $H_0$ fully restored. This separation reflects the design of 
the method: the first stage provides a criterion for choosing an encoding without requiring the prior determination of an appropriate physical geometry, while the
second stage addresses the actual state-preparation problem for the chosen encoding.

Throughout, we report fidelity
\begin{equation}
f(t) = |\langle \psi_{\rm target} | \psi(t) \rangle|^2
\label{eq:fidelity_def}
\end{equation}
as the primary figure of merit, expressed as a dimensionless fraction in $[0,1]$, and we also compare the final state energy $E_{\mathrm{reached}}$ reached during the evolution 
under the control Hamiltonian, to the exact target ground state energy $E_0$.

As discussed below, these two quantities are not simple proxies for one another, and reporting both is essential to a correct interpretation of the optimization outcomes.

\subsection{Two control objectives: energy versus fidelity}
\label{sec:objectives}

The BFGS quasi-Newton algorithm~\cite{Nocedal2006} is used to optimize either of two cost functions evaluated on the terminal state $|\psi(T)\rangle$: the energy expectation value 
$\langle H(T)\rangle$, or the state infidelity $1-f(T)$. Neither is fully satisfactory in isolation.

Energy-based optimization benefits from the variational bound $\langle \psi | H | \psi \rangle \geq E_0$, which guarantees a fixed convergence target; however, 
$\langle H \rangle$ is a population-weighted average over the entire spectrum, not a measure of overlap with the target eigenvector, so a state can register a deceptively favorable or 
unfavorable energy while bearing little resemblance to the true ground state. Fidelity-based optimization removes this ambiguity by targeting the physically relevant quantity directly, 
but forgoes the variational guardrail: nothing constrains how the residual, imperfectly prepared population is distributed over the excited spectrum, so a state with substantial ground-state 
fidelity can simultaneously exhibit a much larger energy error than its fidelity alone would suggest.

A representative illustration occurs for the SB-encoded $N=3$ Hamiltonian (see Table~\ref{tab:SB34_fidobj}): fidelity-objective optimization reaches $f=0.832$, decomposing as $83.2\%$ population on the exact ground state, $\approx 0\%$ on the (unphysical, padded) first excited level, and the remaining $12.5\%$ and $4.3\%$ on two eigenstates at $E=8.56$ and $E=24.55$ in the same units 
as $E_0=-2.046$. The resulting energy, $E=0.429$ is far from $E_0$  despite the state being 
$83\%$ correct in the physically relevant sense. This is not an artifact; it is the generic behavior of $\langle H \rangle$ whenever any leakage exists onto high-lying eigenstates, and it is why 
we treat fidelity, not energy, as the primary reachability metric throughout. Energy is retained only as a secondary, cross-checking quantity, and its use as a stand-alone optimization target 
below is presented specifically to illustrate this pitfall rather than as a competing reachability criterion.

\subsection{Standard-binary encoding for $N=3$ and $N=4$}
\label{sec:SB_lack}

We first consider the two smallest nontrivial encoded systems, $N=3$ and $N=4$. 
To assess whether the SB encoding is compatible with the native Rydberg controls, we directly optimize the physical control parameters -- Rabi frequency, detuning, phase, total evolution time, 
and atomic geometry -- using BFGS, under each of the two objectives of Sec.~\ref{sec:objectives}. The results therefore test whether the target state can be prepared when the physical control problem is given maximal freedom. Because the control search is unrestricted, a systematic inability to reach high fidelity provides strong evidence of a controllability obstruction rather 
than an artifact of pulse parametrization or local optimization. The results  are summarized in Table~\ref{tab:SB34_energyobj} and \ref{tab:SB34_fidobj} .

For $N=3$, direct optimization of the Rydberg control parameters gives only a very small ground-state fidelity when the energy is minimized, whereas direct fidelity optimization 
improves the overlap to $f=0.832$ but still fails to prepare the target space with high fidelity. This limited performance indicates that the native collective control algebra is not 
sufficiently compatible with the SB representation to provide robust state preparation.

For $N=4$, the situation is considerably improved, but the SB encoding still does not yield exact state preparation.  Energy optimization gives
$E_{\mathrm{reached}}=-1.69260, \,f=0.979$, whereas direct fidelity optimization gives $f=0.981, E_{\mathrm{reached}}=-1.64548$, for a target energy 
$E_{targ}=-2.14398$. Thus, even after optimizing the dynamical control parameters, the SB encoding remains incompatible with exact preparation.

\begin{table}[htp]
\caption{SB-encoded $N=3,4$ Hamiltonians: control optimization with the energy objective, geometry free.}
\begin{center}
\resizebox{0.48\textwidth}{!}
{%
\begin{tabular}{|c|c|c|c|}\hline
 \makecell{Target \\Hamiltonian} & \makecell{Target ground\\state energy} & \makecell{Control Hamiltonian \\BFGS optimized energy} &  \makecell{Corresponding \\ Fidelity}  \\\hline\hline
$\langle H^{(1)}_{3,{\rm SB}} \rangle$ & -2.04567 &  0.0  & $10^{-8}$\\\hline
$\langle H^{(1)}_{4,{\rm SB}} \rangle$ & -2.14398 &  -1.69260  & 0.979  \\\hline
\end{tabular}
}
\end{center}
\label{tab:SB34_energyobj}
\end{table}%

\begin{table}[htp]
\caption{Same systems as Table~\ref{tab:SB34_energyobj}, control optimization with the fidelity objective, geometry free.}
\begin{center}
\resizebox{0.48\textwidth}{!}
{%
\begin{tabular}{|c|c|c|c|}\hline
 \makecell{Target \\Hamiltonian} & \makecell{Target ground\\state energy} & \makecell{Control Hamiltonian \\BFGS optimized fidelity} &  \makecell{Corresponding \\energy} \\\hline\hline
$\langle H^{(1)}_{3,{\rm SB}} \rangle$ & -2.04567 &  0.832  & 0.4288\\\hline
$\langle H^{(1)}_{4,{\rm SB}} \rangle$ & -2.14398 &  0.981  & -1.64548  \\\hline
\end{tabular}
}
\end{center}
\label{tab:SB34_fidobj}
\end{table}%

These observations motivate the use of the geometry-independent reachability criterion developed above as a diagnostic of the encoding itself.  In particular, rather than 
attempting to compensate for an incompatible encoding solely through pulse optimization, we optimize the encoding so as to align the spectral algebra of the target 
Hamiltonian with the Krylov space generated by the native collective controls.

\begin{table}[htp]
\caption{Characteristics of reachable Krylov space and target spectral algebra for standard binary encoded Hamiltonians. For 2 Rydberg atoms, we remind that dim(${\cal{K}}$)=10.}
\begin{center}
\resizebox{0.48\textwidth}{!}
{%
\begin{tabular}{|c|c|c|c|c|}\hline
 \makecell{Target \\Hamiltonian} & \makecell{dim(${\cal{A}}_{\rm targ}$)} &\makecell{${\cal A}_{\rm targ}\cap{\cal K}$} &   \makecell{Principal angles \\cosine}& \makecell{F(H)}  \\\hline\hline
$\langle H^{(1)}_{3,{\rm SB}} \rangle$ & 3  &  1  & $1.,0.835,0.534$ &-1.9829 \\\hline
$\langle H^{(1)}_{4,{\rm SB}} \rangle$ &  4 &   1 &  $1.,0.995,0.971,0.317$ & -3.024 \\\hline
\end{tabular}
}
\end{center}
\label{tab:SB34_char}
\end{table}%

Table~\ref{tab:SB34_char} makes the origin of this obstruction quantitative. For $\langle H^{(1)}_{3,{\rm SB}} \rangle$ and $\langle H^{(1)}_{4,{\rm SB}} \rangle$ , we list the dimension of the target spectral algebra $\dim(\mathcal A_{\rm targ})$, the dimension of its intersection with the reachable Krylov space, $\dim(\mathcal A_{\rm targ}\cap\mathcal K)$, the full set of principal angles cosines between the two subspaces, and the value of the subspace-overlap functional $F(H)=-\sum_i\sigma_i^2=-\sum_i\cos^2\theta_i$. Two regimes are apparent:

\begin{itemize}
\item For $N=3$, only one principal angle cosine equals~1, out of $\dim(\mathcal A_{\rm targ})=3$; the algebra intersection has dimension~1. This large misalignment ($F(H)=-1.98$, far from the saturation value $-3$) is precisely why the BFGS-optimized SB encoding fails outright.
\item For $N=4$, three of the four principal angles cosines are large but strictly less than~1 ($0.995,\,0.971$, together with a much smaller $0.317$), giving $F(H)=-3.02$ against a saturation value of $-4$. This partial, near-but-incomplete alignment is consistent with the observed near-saturated-but-imperfect fidelity of $~0.98$.
\end{itemize}

For both $N=3$ and $N=4$, this motivates the encoding optimization of Sec.~\ref{sec:opt_35}, whose explicit goal is to drive every principal angle cosine to 1 via a unitary change of representation. 

The BFGS optimization is particularly instructive because it already includes the atomic geometry as a variational
degree of freedom.  Thus, the poorer performance of the SB encoding cannot simply be attributed to the use of an unsuitable fixed
geometry.  Rather, the results show that, for these two small systems, the geometry-independent principal-angle analysis identifies an
encoding mismatch that remains relevant even after the physical geometry and pulse parameters are subsequently optimized.

\subsection{Principal-angle optimization of the encoding for $N=3$ and $N=4$}\label{sec:opt_35}

To assess the effectiveness of the proposed encoding optimization, we apply the Riemannian gradient-descent procedure of Sec.~II to the Hamming-weight-one deuteron Hamiltonians $H^{(1)}_3$ and $H^{(1)}_4$. Since all encodings related by a unitary transformation share the same spectrum, the optimization explores the unitary orbit of the Hamiltonian while leaving its physical content -- and in particular its exact ground-state energy -- unchanged.

\subsubsection{$N=3$}

The optimization is initialized from the standard permutation-symmetric (SB) encoding. For this encoding, the spectral algebra has dimension three, whereas
its intersection with the geometry-independent Krylov space has dimension only one.  The cosines of the principal angles between the two subspaces
show that only a fraction of the  three-dimensional spectral algebra lies within the controllable subspace. The single unit singular value indeed reflects the one-dimensional intersection 
between the target spectral algebra and the available geometry-independent operator space, while the remaining directions are not contained in the native control space.

Optimizing the unitary encoding removes this mismatch. The Riemannian gradient optimization converges rapidly: during the first twenty iterations the objective decreases monotonically until reaching the theoretical optimum,
\begin{equation}
F_{\rm opt}=-3=-\dim(\mathcal A(H_3)),
\end{equation}
after which the line search naturally returns a zero step size. All three principal-angle cosines reach unity at convergence, as shown in Table~\ref{tab:Opt34_BFGS}, certifying complete 
inclusion of the encoded spectral algebra within the Rydberg-reachable Krylov space, $\mathcal A(H_3^{\rm opt})\subseteq\mathcal K$.

Interestingly, the optimized Hamiltonian (Eq.~\ref{eq:H3opt}) exhibits a restored permutation symmetry between the two qubits.
This suggests that maximizing the overlap between the spectral algebra and the controllable algebra naturally drives the encoding toward representations that better exploit the intrinsic symmetry of the collective Rydberg control Hamiltonian ($S_x,S_y,S_z$ act identically on every qubit, and so naturally favor encodings symmetric under qubit exchange).
\begin{eqnarray}
H_3^{opt} &=& 7.76585\,I
-2.2954\left(
X_0 + X_1 + X_0Z_1 + Z_0X_1
\right)\nonumber\\
&+&5.71368\left(
Z_0 + Z_1
\right)-2.22862\left(
X_0X_1 + Y_0Y_1
\right)
\nonumber\\
&+&3.66151\,Z_0Z_1 
\label{eq:H3opt}
\end{eqnarray}

We then use this optimized encoding in the BFGS dynamical optimization of the Rydberg control parameters, allowing the interaction geometry to vary.  
The resulting protocol reaches the exact ground-state energy $E_0=-2.04567$ with $100\%$ fidelity (see Table~\ref{tab:Opt34_BFGS}). Remarkably, fixing the interaction 
geometry and repeating the dynamical optimization yields the same exact ground-state energy and 100\% fidelity (see Sec.~\ref{subsec:fixed_geometry_SB}). 
This demonstrates that the success of the optimized encoding does not rely on optimizing the geometry and is therefore robust to the choice of interaction geometry.
optimized jointly with the control parameters.

\subsubsection{$N=4$}

The same procedure applied to $H_4^{(1)}$, initialized from the SB encoding with $\dim(\mathcal A(H_4))=4$ and $F_0=-3.02$ (see Table~\ref{tab:SB34_char}).
Again, the spectral algebra has dimension four, while its intersection with the geometry-independent Krylov space is one-dimensional.

After optimization of the encoding, all four principal-angle cosines reach unity, as shown in Table~\ref{tab:Opt34_BFGS}, which establishes
$ A(H^{\mathrm{opt}}_3)\subseteq K$.

The resulting optimized Hamiltonian $H_4^{\rm opt}$ is given in Pauli decomposition below. 
\begin{eqnarray*}
H^{opt}_4&=&14.3284\,\mathbbm{1}
-4.7639\,(X_0+X_1)
+8.14177\,(Z_0+Z_1)
\\
&-&1.98892\,X_0X_1
-2.4656\,(X_0Z_1+Z_0X_1)
\\
&-&2.47555\,Y_0Y_1
+1.04113\,Z_0Z_1 
\end{eqnarray*}

Re-optimizing the Rydberg control sequence for $H_4^{\rm opt}$ yields a ground-state energy of $-2.143939$ against the exact value $-2.143982$, a relative error below $2\times10^{-5}$, corresponding to $100\%$ overlap fidelity within our numerical resolution -- compared to the $98\%$ obtained with the unoptimized SB encoding of Sec.~\ref{sec:SB_lack}. More interestingly,
as for $N=3$, re-running the optimization with fixed interaction geometry also allows to reach unit fidelity.

Together, the $N=3$ and $N=4$ results confirm that (i) the principal-angle objective is a reliable predictor of control failure, and (ii) when misalignment is detected, the Riemannian optimization systematically repairs it, recovering exact reachability without altering the physical spectrum.

\begin{table}[htp]
\caption{Optimized ground state energies and corresponding overlap fidelity reached by the Rydberg atoms evolution using optimized Hamiltonians.}
\begin{center}
\resizebox{0.48\textwidth}{!}
{%
\begin{tabular}{|c|c|c|c|c|c|}\hline
 \makecell{Target \\Hamiltonian} & \makecell{Target ground\\state energy} & \makecell{Control Hamiltonian \\optimized energy} &  \makecell{Fidelity} & \makecell{Principal \\angles cosine} & F(H) \\\hline\hline
$\langle H^{(1)}_{3,{\rm opt}} \rangle$ & -2.04567 &   - 2.04567  & 100\%  & $1.,1.,1.$ &-3 \\\hline
$\langle H^{(1)}_{4,{\rm opt}} \rangle$ & -2.143982 &  -2.143939  & 100\%& $1.,1.,1.,1.$ & -4 \\\hline
\end{tabular}
}
\end{center}
\label{tab:Opt34_BFGS}
\end{table}%

These results provide a direct numerical validation of the proposed encoding criterion for the smallest nontrivial cases.

These two examples provide a direct test of the central idea of the procedure.  The principal-angle optimization is performed without
knowledge of, or optimization over, the atomic geometry.  Nevertheless, the encoding selected by this geometry-independent criterion performs
substantially better in the subsequent physical control optimization, where the geometry is restored and optimized.  Thus, the role of the
principal-angle procedure is not to replace the dynamical optimization or to determine the optimal geometry.  Instead, it provides a
geometry-independent criterion for identifying a representation of the target Hamiltonian that is structurally better matched to the
independently tunable control operators.

\subsection{Standard-binary encoding for $N=5,\ldots,8$}
\label{subsec:largeN_SB}

\subsubsection{Algebraic reachability and state-level reachability}
\label{subsubsec:LargeN}

The behavior for larger systems reveals an important distinction between algebraic reachability and state-level reachability.  For the SB encoding,
the geometry-independent principal-angle criterion does not reach full containment for $N=5,\ldots,8$.  The principal-angle cosines obtained for
the unoptimized SB encoding, together with their corresponding objective $F(H)$, are summarized in Table~\ref{tab:SB_char}.

\begin{table}[htp]
\caption{Characteristics of reachable Krylov space and target spectral algebra for standard binary encoded Hamiltonians. For 3 Rydberg atoms, dim(${\cal{K}}$)=20.}
\begin{center}
\resizebox{0.48\textwidth}{!}
{%
\begin{tabular}{|c|c|c|c|c|}\hline
 \makecell{Target \\Hamiltonian} & \makecell{dim(${\cal{A}}$)} &\makecell{${\cal A}_{\rm targ}\cap{\cal K}$} &  \makecell{Principal angles\\ cosine}& \makecell{F(H)}  \\\hline\hline
$\langle H^{(1)}_{5,{\rm SB}} \rangle$ & 5 &  1  &$1.,0.722 ,0.515 ,0.408,0.213$ &  -1.99967\\\hline
$\langle H^{(1)}_{6,{\rm SB}} \rangle$ & 6 & 1   &  $1.,0.854,0.528,0.455,0.369,0.274$ & -2.42701 \\\hline
$\langle H^{(1)}_{7,{\rm SB}} \rangle$ & 7 & 1   &  $1.,0.840,0.601,0.585,0.422,0.338,0.242$ & -2.76059 \\\hline
$\langle H^{(1)}_{8,{\rm SB}} \rangle$ & 8 & 1   &  $1.,0.947,0.922,0.643,0.582,0.423,0.277,0.123$ & -3.76885 \\\hline
\end{tabular}
}
\end{center}
\label{tab:SB_char}
\end{table}%

For all four system sizes, the intersection between the spectral algebra and the geometry-independent Krylov space remains one-dimensional, whereas
the spectral-algebra dimensions are respectively $5$, $6$, $7$, and $8$.

Nevertheless, when the geometry of the Rydberg array is included as a free optimization variable, the SB encoding can reach the desired ground state
with very high fidelity.  The results are reported in Table~\ref{tab:SB_BFGS_geom}.
For all four system sizes, the optimized dynamical protocols reproduce the corresponding ground-state energies within the reported numerical
precision and achieve fidelities larger than $0.999$.

\begin{table}[htp]
\caption{Optimized ground state energies and corresponding overlap fidelity reached by the Rydberg atoms evolution using Standard Binary encoding. In these results, the physical control -- including atoms geometry -- 
are left entirely free and are all optimized by the BFGS procedure.}
\begin{center}
\resizebox{0.48\textwidth}{!}
{%
\begin{tabular}{|c|c|c|c|}\hline
 \makecell{Target \\Hamiltonian} & \makecell{Target ground\\state energy} & \makecell{Control Hamiltonian \\BFGS optimized energy} &  \makecell{Fidelity}  \\\hline\hline
$\langle H^{(1)}_{5,{\rm SB}} \rangle$ & -2.183592&  -2.183592&$>0.999$\\\hline
$\langle H^{(1)}_{6,{\rm SB}} \rangle$ &- 2.201569&-2.201569 & $>0.999$ \\\hline
$\langle H^{(1)}_{7,{\rm SB}} \rangle$ &- 2.21042& - 2.21042  & $>0.999$\\\hline
$\langle H^{(1)}_{8,{\rm SB}} \rangle$ &-2.215039& -2.215039 & $>0.999$\\\hline
\end{tabular}
}
\end{center}
\label{tab:SB_BFGS_geom}
\end{table}%

This result is significant because it demonstrates that the failure of the geometry-independent algebraic containment condition does not imply failure
of ground-state preparation.  The condition $A(H)\subseteq K$ is a sufficient, geometry-independent compatibility criterion at the operator-algebra 
level, but it is stronger than what is required to prepare a single eigenstate when the geometry-dependent interaction Hamiltonian is restored and 
optimized. Thus the distinction between algebra-level and state-level reachability -- together with the deliberate exclusion of $H_0$ from
$\mathcal K$ discussed in Sec.~III. --  is essential. 
The geometry-independent Krylov space $\mathcal K$ indeed characterizes only the operator directions reachable under the independently tunable global controls $S_x,S_y,S_z$; it does
not include the additional directions opened up once the interaction term $H_0(\mathbf r)$ is restored and the geometry $\mathbf r$ is allowed to vary. When geometry is optimized jointly with the pulse sequence, the effective reachable space explored by the dynamics is larger than $\mathcal K$, and evidently large enough to contain the specific ground-state direction of interest for $N=5,\ldots,8$, even though it is not large enough to contain the entire spectral algebra $\mathcal A(H_N^{(1)})$ in the geometry-independent sense measured by the diagnostic. 
Complete algebra containment in $\mathcal K$ is thus confirmed to be sufficient, but -- as soon as geometry is a free parameter -- not necessary, for high-fidelity ground-state preparation.
In other words, the geometry can provide additional dynamical resources that are not captured by the geometry-independent reachability 
criterion.

\subsubsection{Sensitivity of the SB encoding to the interaction geometry}
\label{subsec:fixed_geometry_SB}

The results above for the principal angle optimized Hamiltonians $N=3$ and $N=4$ were obtained with a fixed interaction geometry, providing a stringent test of the optimized
encodings.  We now investigate systematically the dependence of the standard-binary encoding on the Rydberg geometry for the larger target
spaces $N=5,\ldots,8$.

For each system size, we consider the same five representative geometries,
\begin{align}
    \mathbf r_A &= (0,0),(12,0),(24,0), \nonumber\\
    \mathbf r_B &= (0,0),(12,0),(32,0), \nonumber\\
    \mathbf r_C &= (0,0),(9,9),(18,0), \nonumber\\
    \mathbf r_D &= (0,0),(9,4),(18,0), \nonumber\\
    \mathbf r_E &= (0,0),(18,9),(18,0).
\end{align}
For each fixed geometry, the encoding is kept unchanged and the BFGS optimization is restricted to the dynamical control parameters
$\Omega$, $\delta$, $\phi$, and the total evolution time $T$.  Thus, no geometrical degree of freedom is available to the optimizer.

We perform the control optimization using both the final energy and the ground-state fidelity as objective functions.  

For the unoptimized SB encoding, the fidelity optimization results are shown in Table~\ref{tab:SB_fixed_geom3} and Table~\ref{tab:SB_fixed_geom4}.  The strong dependence on
geometry is already apparent at the fidelity level: depending on the geometry, the optimized fidelity ranges from 0.60 to 0.99. It is even more striking 
at the level of the energy.  For example, for $N=5$ the optimized energy ranges from positive values,
$E_{\rm reached}=5.92$ for $\mathbf r_A$, $\mathbf r_C$, and $\mathbf r_D$, to $E_{\rm reached}=-1.3860$ for $\mathbf r_E$, while the exact ground-state
energy is $E_0=-2.18359$.  Similarly, for $N=7$, the optimized energy varies from $E_{\rm reached}=6.2784$ at $\mathbf r_A$, $\mathbf r_C$, and
$\mathbf r_D$ to $E_{\rm reached}=-2.05439$ at $\mathbf r_E$, compared with $E_0=-2.21042$.  The effect is less dramatic for $N=8$, for which geometries
$\mathbf r_B$ and $\mathbf r_E$ yield energies very close to the exact ground-state value, whereas the other geometries remain substantially higher.

\begin{table}[htp]
\caption{BFGS optimized fidelity of the (unoptimized) SB-encoded ground state reached by BFGS control optimization at
five fixed atomic geometries, for $N=5,6,7,8$. Geometry and control parameters other than $\Omega,\delta,\phi,T$ are excluded from the search.}
\begin{center}
\resizebox{0.36\textwidth}{!}{%
\begin{tabular}{|c|c|c|c|c|c|}\hline
$N$ & $\mathbf r_A$ & $\mathbf r_B$ & $\mathbf r_C$ & $\mathbf r_D$ & $\mathbf r_E$ \\\hline\hline
5 & $0.603$ & $0.841$ & $0.603$ & $0.603$ & $0.924$ \\\hline
6 & $0.933$   & $0.999$  & $0.933$    & $0.933$   & $0.931$ \\\hline
7 & $0.679$ & $0.996$    & $0.679$  & $0.679$ & $0.993$ \\\hline
8 & $0.951$   & $0.999$& $0.951$    & $0.951$   & $>0.999$ \\\hline
\end{tabular}
}
\end{center}
\label{tab:SB_fixed_geom3}
\end{table}%

\begin{table}[htp]
\caption{Corresponding energy of the (unoptimized) SB-encoded ground state reached by BFGS control optimization at
five fixed atomic geometries, for $N=5,6,7,8$. Geometry and control parameters other than $\Omega,\delta,\phi,T$ are excluded from the search.}
\begin{center}
\resizebox{0.46\textwidth}{!}{%
\begin{tabular}{|c|c|c|c|c|c|}\hline
$N$ & $\mathbf r_A$ & $\mathbf r_B$ & $\mathbf r_C$ & $\mathbf r_D$ & $\mathbf r_E$ \\\hline\hline
5 & $5.92$ & $1.74$ & $5.92$ & $5.92$ & $-1.3860$ \\\hline
6 & $-0.90686$   & $-2.20157$  & $-0.90686$    & $-0.90686$   & $-1.67285$ \\\hline
7 & $6.2784$ & $-2.13088$    & $6.2784$  & $6.2784$ & $-2.05439$ \\\hline
8 & $-0.64587 $   & $-2.21356$& $-0.645744$    & $-0.6457688$   & $-2.21504$ \\\hline
\end{tabular}}
\end{center}
\label{tab:SB_fixed_geom4}
\end{table}%

The energy-minimisation leads to the results summarized in Table~\ref{tab:SB_fixed_geom1} and  Table~\ref{tab:SB_fixed_geom2}.  The geometry dependence is even
more directly apparent in these results.  For $N=5$, the optimized SB encoding reaches fidelities below $10^{-9}$ at $\mathbf r_A$, $\mathbf r_C$, and $\mathbf r_D$, and below $10^{-10}$ at $\mathbf r_B$, while reaching $F=0.924$ at $\mathbf r_E$.  For $N=6$, the fidelities are $F\simeq0.933$ at four of the five geometries but fall below $10^{-6}$ at
$\mathbf r_B$.  For $N=7$, the same geometries $\mathbf r_A$, $\mathbf r_C$, and $\mathbf r_D$ again lead to very poor fidelities,
whereas $\mathbf r_B$ and $\mathbf r_E$ yield $F=0.996$ and $0.993$, respectively.  At $N=8$, by contrast, the SB encoding remains comparatively
robust, with $F\geq0.928$ for all five geometries and fidelities above $0.999$ for $\mathbf r_B$ and $\mathbf r_E$.

\begin{table}[htp]
\caption{Fidelity of the (unoptimized) SB-encoded ground state reached by BFGS control optimization at
five fixed atomic geometries, for $N=5,6,7,8$. Geometry and control parameters other than $\Omega,\delta,\phi,T$ are excluded from the search.}
\begin{center}
\resizebox{0.38\textwidth}{!}{%
\begin{tabular}{|c|c|c|c|c|c|}\hline
$N$ & $\mathbf r_A$ & $\mathbf r_B$ & $\mathbf r_C$ & $\mathbf r_D$ & $\mathbf r_E$ \\\hline\hline
5 & $<10^{-9}$ & $<10^{-10}$ & $<10^{-8}$ & $<10^{-7}$ & $0.889$ \\\hline
6 & $0.919$   & $<10^{-6}$  & $0.919$    & $0.919$   & $0.620$ \\\hline
7 & $<10^{-9}$ & $0.652$    & $<10^{-8}$  & $<10^{-7}$ & $0.806$ \\\hline
8 & $0.928$   & $>0.999$& $0.928$    & $0.928$   & $0.999$ \\\hline
\end{tabular}}
\end{center}
\label{tab:SB_fixed_geom1}
\end{table}%

\begin{table}[htp]
\caption{BFGS optimized energy of the (unoptimized) SB-encoded ground state reached by BFGS control optimization at
five fixed atomic geometries, for $N=5,6,7,8$. Geometry and control parameters other than $\Omega,\delta,\phi,T$ are excluded from the search.}
\begin{center}
\resizebox{0.48\textwidth}{!}{%
\begin{tabular}{|c|c|c|c|c|c|}\hline
$N$ & $\mathbf r_A$ & $\mathbf r_B$ & $\mathbf r_C$ & $\mathbf r_D$ & $\mathbf r_E$ \\\hline\hline
5 & $0.0$ & $0.0$ & $0.0$ & $0.0$ & $-1.3944$ \\\hline
6 & $-1.32125$   & $0.0$  & $-1.32125$    & $-1.32125$   & $-1.35360$ \\\hline
7 & $0.0$ & $-1.09564$    & $0.0$  & $0.0$ & $-1.24073$ \\\hline
8 & $-1.37659$   & $-2.21504$& $-1.37659$    & $-1.37659$   & $-2.21419$ \\\hline
\end{tabular}
}
\end{center}
\label{tab:SB_fixed_geom2}
\end{table}%

The energy and fidelity optimizations therefore lead to the same qualitative conclusion: the performance of the unoptimized SB representation can depend
strongly on the interaction geometry when the latter is held fixed.  

This fixed-geometry behavior is consistent with the geometry-independent principal-angle analysis, but also illustrates its limitations.  For
$N=5,\ldots,8$, the geometry-independent Krylov-space diagnostic identifies persistent algebraic misalignment of the SB encoding, with only one
principal angle cosine equal to unity.  It therefore provides no geometry-independent guarantee of successful state preparation.  At fixed geometry,
the actual performance depends additionally on the alignment of the geometry-dependent interaction Hamiltonian $H_0(\mathbf r)$ with the target
ground state.  This dependence is not included in the construction of the geometry-independent Krylov space and is therefore not expected to be
captured by the principal-angle diagnostic alone.

An instructive example is $N=8$.  Despite having no stronger geometry-independent algebraic containment than the smaller systems, the
SB encoding never falls below $92.8\%$ fidelity over the five geometries considered here.  The particularly good performance at $\mathbf r_B$ and
$\mathbf r_E$ is therefore evidence of a favorable geometry-dependent alignment rather than of improved algebraic compatibility in the
geometry-independent criterion.  Conversely, the poor performance at selected geometries for $N=5$--$7$ demonstrates that high fidelity obtained
in the fully unconstrained optimization of Sec.~\ref{subsubsec:LargeN} does not imply robustness with respect to the physical geometry.

\subsection{Geometry-independent encoding optimization at fixed geometry}
\label{subsec:optimized_encoding_fixed_geometry}

We next ask whether the geometry sensitivity identified above can be reduced by optimizing the encoding itself, without using any information about the
physical geometry.  For each $N=5,\ldots,8$, a single encoding is obtained from the principal-angle optimization of Sec.~\ref{sec:UOOpt}.
We emphasize that this principal-angle optimization depends only on the target Hamiltonian and the geometry-independent Krylov space; in particular, 
the atomic positions are not included as optimization variables. Once obtained, the optimized encoding is thus kept fixed and reused for all five geometries.

Table~\ref{tab:Opt5678_BFGS} summarizes the principal-angle cosines and the corresponding overlap objective after optimization. The table should be compared directly with the standard-binary results of Table~\ref{tab:SB_char}. The improvement is systematic: the optimization increases the number of directions of the target spectral algebra that are exactly aligned with the 
geometry-independent Krylov space. Specifically, the number of unit principal-angle cosines increases from 1 for the SB encoding to 3, 4, 5, and 6 for $N=5,6,7,8$, respectively. Thus, 
although complete algebra containment is not reached for these system sizes, the optimized encoding places a progressively larger fraction of the target spectral algebra inside the operator 
space generated by the independently tunable collective controls. The residual non-unit singular values quantify the directions that remain outside this geometry-independent space. 
This distinction is important: the optimization is not selecting a favorable geometry, nor is it claiming complete state-level controllability; it is improving the algebraic compatibility of the 
representation with the control resources that are common to all geometries.

\begin{table}[htp]
\caption{ Principal angle cosines and objective functions for $N=5$ to $N=8$. The optimization is performed over the unitary orbit of the target Hamiltonian using the 
geometry-independent Krylov space $\mathcal K$.}
\begin{center}
\resizebox{0.48\textwidth}{!}
{%
\begin{tabular}{|c|c|c||}\hline
 \makecell{Target \\Hamiltonian} &  \makecell{Principal angles \\ cosine} & F(H) \\\hline\hline
$\langle H^{(1)}_{5,{\rm opt}} \rangle$ &  $1.,1.,1.,0.886,0.630 $&  -4.18241 \\\hline
$\langle H^{(1)}_{6,{\rm opt}} \rangle$ &$1.,1.,1.,1.,0.930,0.091$ &-4.87249 \\\hline
$\langle H^{(1)}_{7,{\rm opt}} \rangle$ &$1.,1.,1.,1.,1.,0.622791,4.10^{-7}$ &-5.38787  \\\hline
$\langle H^{(1)}_{8,{\rm opt}} \rangle$ & $1.,1.,1.,1.,1.,1.,7.10^{-7},1.10^{-8}$ &-6  \\\hline
\end{tabular}
}
\end{center}
\label{tab:Opt5678_BFGS}
\end{table}%
For each geometry, we again perform the BFGS optimization of the dynamical control parameters $\Omega$, $\delta$, $\phi$, and $T$, with the geometry
held fixed. The optimized encoding is not changed between these runs.

The resulting ground-state optimized fidelities are reported in Table~\ref{tab:opt_fixed_geom}.  For $N=5$, the same
optimized encoding gives $F>0.9999$ at four of the five geometries and $F=0.9753$ at $\mathbf r_E$.  For $N=6$, the fidelity remains above
$0.996$ at four geometries and is $0.9409$ at $\mathbf r_E$.  For $N=7$, four geometries again give $F>0.9999$, while the fidelity at $\mathbf r_E$
is $0.8596$.  Finally, for $N=8$, all five geometries yield fidelities above $0.9997$, with four of them above $0.9999$. 

\begin{table}[htp]
\caption{Fidelity of the principal-angle-optimized encoding at the same five fixed atomic geometries given in the text. A single optimized Hamiltonian per $N$ is reused, unchanged, across all
five geometries.}
\begin{center}
\resizebox{0.48\textwidth}{!}{%
\begin{tabular}{|c|c|c|c|c|c|}\hline
$N$ & $\mathbf r_A$ & $\mathbf r_B$ & $\mathbf r_C$ & $\mathbf r_D$ & $\mathbf r_E$ \\\hline\hline
5 & $>0.9999$ & $>0.9999$ & $>0.9999$   & $>0.9999$   & $0.9753$ \\\hline
6 & $>0.9999$ & $>0.9963$   & $>0.9999$   & $>0.9999$   & $0.9409$ \\\hline
7 & $>0.9999$ & $>0.9999$ & $>0.9999$   & $>0.9999$   & $0.8596$ \\\hline
8 & $>0.9999$ & $>0.9997$   & $>0.9999$   & $>0.9999$   & $>0.9999$ \\\hline
\end{tabular}}
\end{center}
\label{tab:opt_fixed_geom}
\end{table}%

The contrast with the unoptimized encoding is substantial. Across the fixed-geometry benchmark, the principal-angle-optimized encoding consistently exhibits high
fidelity, demonstrating that the improvement identified by the geometry-independent algebraic criterion translates into robust state-level performance after 
the physical geometry is restored.

In the most extreme cases, namely $N=5$ and $N=7$ at $\mathbf r_A$, $\mathbf r_C$, and
$\mathbf r_D$, the fidelity increases from below $10^{-7}$ to above $0.9999$ without changing the physical geometry, the set of available 
control Hamiltonians, or the target spectrum.  The improvement therefore cannot be attributed to the selection of a more favorable geometry. It is a consequence of 
changing only the representation of the same target Hamiltonian.

The fact that a single optimized encoding is reused across all five geometries is particularly significant.  The encoding is determined before
the geometry is specified and is not re-optimized for any of the subsequent control problems.  The simultaneous increase in the number of exactly aligned algebraic 
directions and in the worst-case state-preparation fidelity therefore demonstrates that the principal-angle procedure provides a genuine geometry-independent improvement 
in the representation, rather than merely identifying a geometry for which the original SB encoding happens to perform well.

The residual geometry dependence is also informative.  The geometry $\mathbf r_E$ is comparatively more difficult for the optimized encodings
at $N=5$, $6$, and $7$, for which the fidelities are $0.9753$, $0.9409$, and $0.8596$, respectively.  This residual dependence is expected because the
principal-angle optimization does not include the geometry-dependent interaction $H_0(\mathbf r)$ in its objective.  It optimizes the alignment
of the target spectral algebra with the geometry-independent controllable operator space, but does not enforce simultaneous compatibility with every
possible realization of $H_0(\mathbf r)$.  The remaining spread in fidelity therefore provides a direct measure of the part of the physical control problem that is 
geometry specific and intentionally left to the subsequent dynamical optimization.

Taken together, these results show that principal-angle optimization acts as a geometry-independent preprocessing step for analog quantum simulation.
It does not determine the optimal physical geometry and does not replace the subsequent dynamical control optimization.  Instead, it provides a
representation in which the target Hamiltonian is substantially better matched to the native collective controls before any device-specific
geometry is selected.  This separation is particularly valuable when the geometry is fixed by the hardware or when simultaneous optimization of the
encoding, geometry, and pulse parameters becomes computationally expensive.

The numerical results establish three complementary points.

First, optimizing the encoding can transform an operator representation that is poorly aligned with the available collective controls into one for
which the geometry-independent spectral algebra is fully contained in the corresponding Krylov space, as demonstrated explicitly for $N=3$ and
$N=4$.  In these cases, the algebraic optimization is followed by near-perfect dynamical state preparation.

Second, the larger-$N$ results show that the algebraic containment criterion should not be interpreted as a necessary condition for preparing a single
ground state.  For $N=5,\ldots,8$, the SB encoding does not satisfy the geometry-independent containment condition, yet joint optimization of the
Rydberg geometry and dynamical controls still produces fidelities exceeding $0.999$.  This is possible because the geometry-dependent interaction
Hamiltonian provides additional structure that is absent from the geometry-independent Krylov analysis.

Third, the fixed-geometry calculations demonstrate the practical advantage of optimizing the encoding independently of the device geometry.  For the
unoptimized SB encoding, the achievable fidelity varies strongly among different geometries.  After geometry-independent encoding optimization,
the same encoding yields high fidelity over all five tested geometries, with minimum fidelities of $0.9753$, $0.9409$, $0.8596$, and approximately
$0.9997$ for $N=5$, $6$, $7$, and $8$, respectively.

Taken together, these results support the interpretation of the principal-angle construction as a geometry-independent encoding-design
criterion rather than as a complete characterization of state-level controllability.  It identifies operator-space incompatibilities before
any device-specific geometry or pulse optimization is performed, and its optimization can substantially reduce the subsequent dependence on the
physical interaction geometry.  The criterion therefore provides a systematic preprocessing step for analog quantum simulation: the encoding
can be optimized at the algebraic level first, after which the remaining device-specific optimization is restricted to the physical geometry and
control parameters.

\subsection{Explicit Rydberg control protocols}

The analysis above provides a necessary reachability criterion, but it does not by itself determine whether a particular
physical control protocol can prepare the target state. We therefore complement it with direct simulations of the Rydberg dynamics using the
optimized control parameters and illustrate the results for the $N=3$ case. 

Under the SB encoding, the system is unable to reach the target ground state, as illustrated in Fig.~\ref{fig:H3Simu}, no matter which criterion (energy-minimization (l.h.s.) or fidelity-maximization (r.h.s.)) is used of optimization, with only one of three principal angles equal to 1, the target ground state of $H_3^{(1)}$ has essentially no overlap with directions reachable under the native 
Rydberg controls, so no choice of pulse sequence, however extensively optimized, can prepare it. The low fidelity is not an optimization failure but a controllability failure.

To make the algebra-reachability obstruction identified in Sec. V A more concrete, we contrast the explicit Rydberg dynamics obtained for the two encodings of $H_3^{(1)}$ side by side in 
Fig.~\ref{fig:H3Simu_opt}. For standard binary encoding, we seek for the best fidelity by including in the optimization all possible control parameters, including atomic geometry, whereas
for optimal Hamiltonian $H_{3, \rm opt}^{(1)}$, we keep the geometry fixed, making the BFGS optimization harder in principle. 
Under the principal-angle-optimized encoding, the system evolves toward  the correct target ground state superposition (represented by dots at the final time evolution in Fig~\ref{fig:H3Simu_opt}), 
the energy reaches the exact ground-state value, and the run terminates at 99.999\% fidelity (Table~\ref{tab:Opt34_BFGS}). The results of these noiseless simulations thus demonstrate that the 
target ground state can be explicitly prepared.

These results provide a direct dynamical confirmation that the principal-angle criterion of Sec. II identifies not just a mathematical obstruction but an operationally decisive one.

\begin{figure}[htbp]
\begin{center}
\includegraphics[height=3.5cm,width=4.2cm]{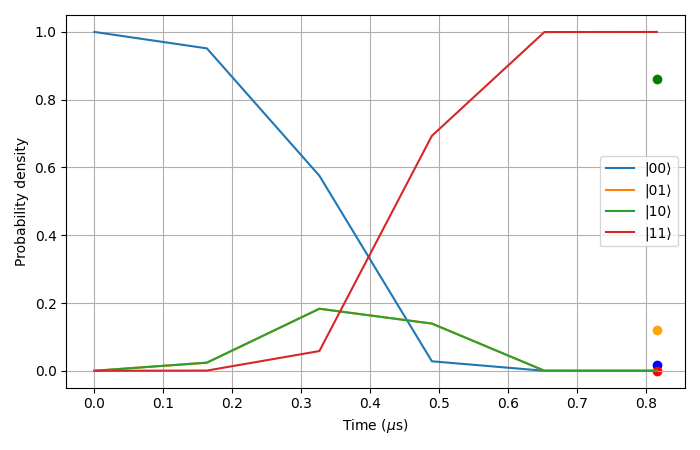}  
\includegraphics[height=3.5cm,width=4.2cm]{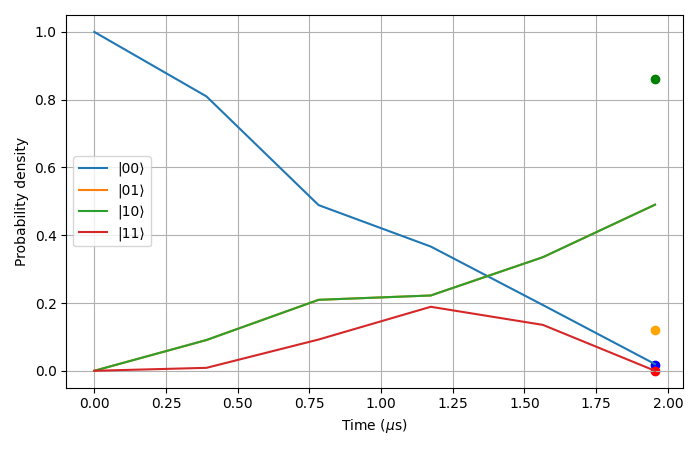}  
\includegraphics[height=3.5cm,width=4.2cm]{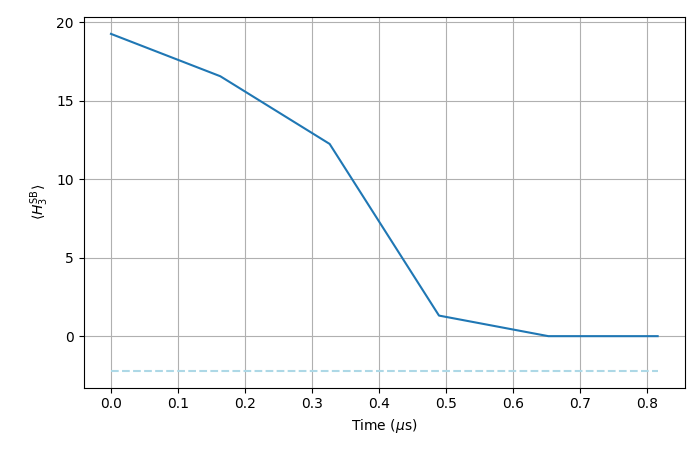}  
\includegraphics[height=3.5cm,width=4.2cm]{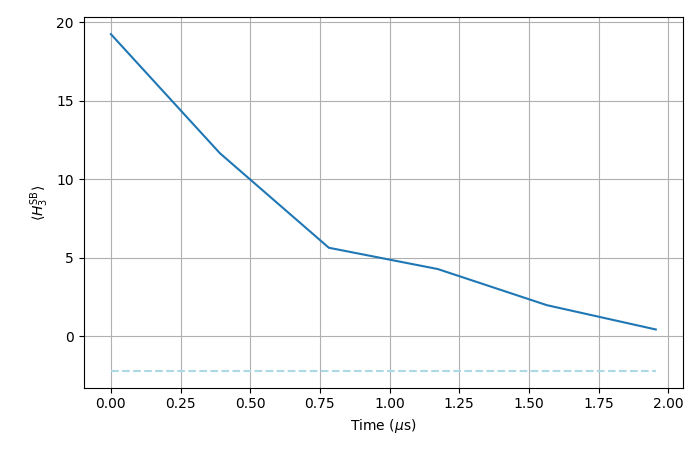}  
\includegraphics[height=3.5cm,width=4.2cm]{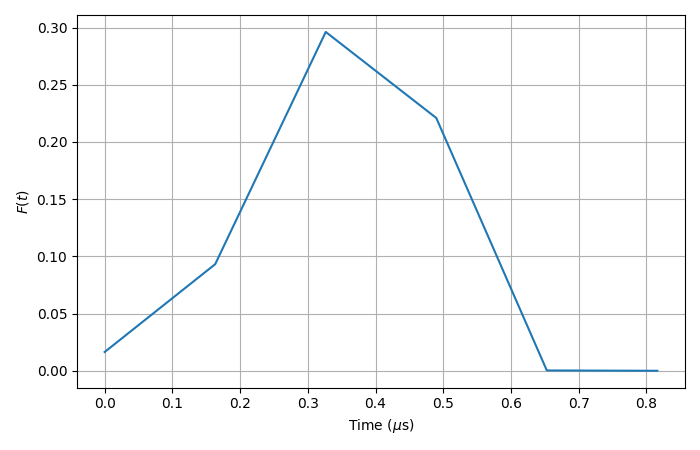}  
\includegraphics[height=3.5cm,width=4.2cm]{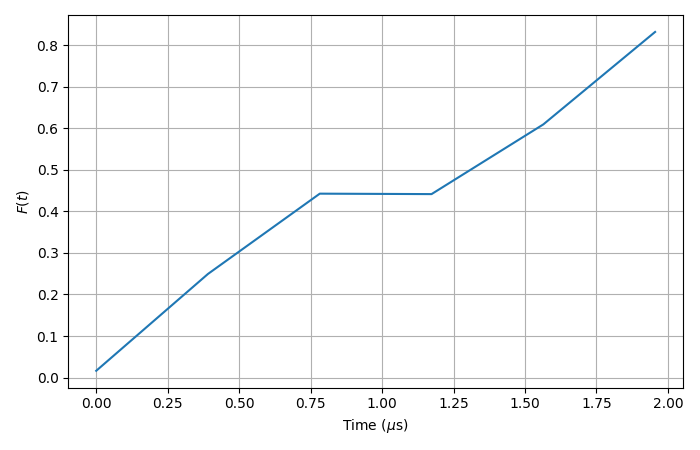}  
\caption{{\it{ Evolution of three Rydberg atoms with optimized parameters to reach the ground state of the deuteron Hamiltonian $H^{(1)}_{3,{\rm SB}}$. 
The top panel displays the probability density
evolution, and the bottom one the corresponding energy. The dashed line corresponds to the exact deuteron ground state energy. The Rydberg Hamiltonian for 2 atoms being symmetric in the exchange of the atoms, the states $|01\rangle$ and $|10\rangle$ are degenerated. The exact deuteron ground state is represented by colored dots. 
}}}\label{fig:H3Simu}
\end{center}
\end{figure}

\begin{figure}[htbp]
\begin{center}
\includegraphics[height=3.5cm,width=4.2cm]{ProbabilityH3SB_FidelityGeom.png}  
\includegraphics[height=3.5cm,width=4.2cm]{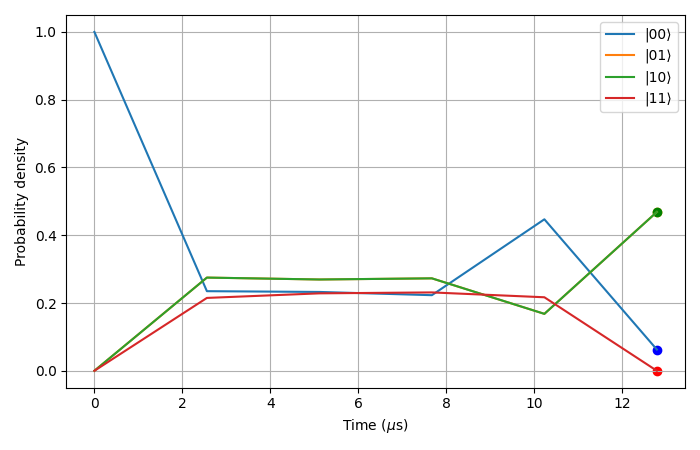}  
\includegraphics[height=3.5cm,width=4.2cm]{EnergyH3SB_FidelityGeom.png}  
\includegraphics[height=3.5cm,width=4.2cm]{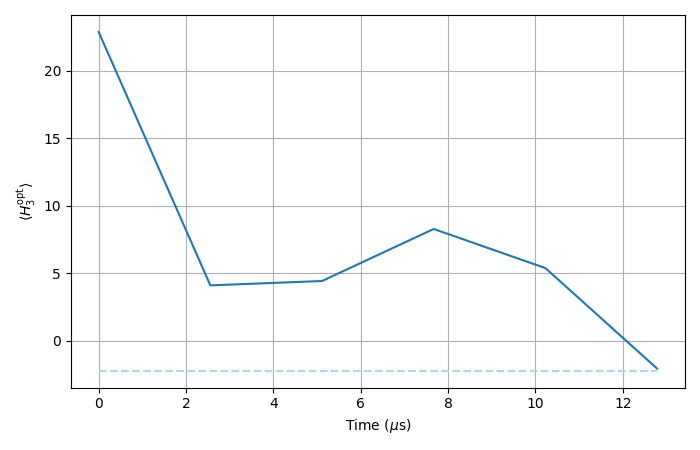}  
\includegraphics[height=3.5cm,width=4.2cm]{FidelityH3SB_FidelityGeom.png}  
\includegraphics[height=3.5cm,width=4.2cm]{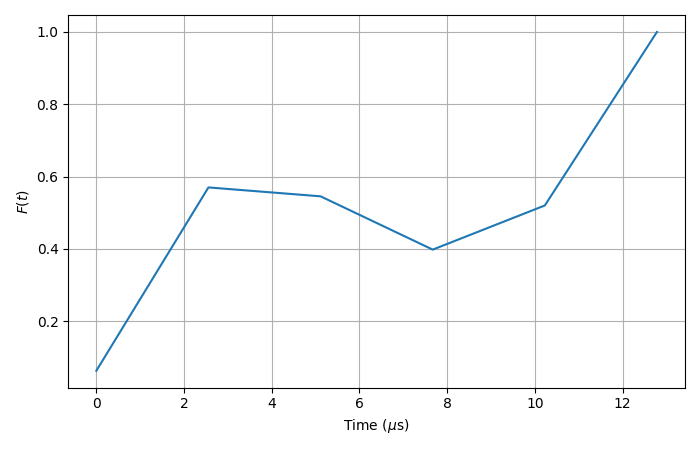}  
\caption{{\it{ Evolution of three Rydberg atoms with optimized parameters to reach the ground state of the deuteron Hamiltonian $H^{(1)}_{3,{\rm opt}}$ (Eq.~(\ref{eq:H3opt})). 
The top panel displays the probability density
evolution, and the bottom one the corresponding energy. The dashed line corresponds to the exact deuteron ground state energy. The Rydberg Hamiltonian for 2 atoms being symmetric in the exchange of the atoms, the states $|01\rangle$ and $|10\rangle$ are degenerated. The last Rydberg state has a $99.999\%$ fidelity with the deuteron ground state, represented by colored dots. 
}}}\label{fig:H3Simu_opt}
\end{center}
\end{figure}

The frequency optimization performed in this work is carried out classically using BFGS algorithm on a classical computer. This
should be regarded as a proof-of-principle demonstration that the corresponding control parameters exist, rather than as a requirement
for their experimental determination. For fixed atomic geometry, the same optimization can in principle be performed in a hardware-in-the-
loop fashion, with the Rydberg processor providing the measured objective function (or its gradient) and the control frequencies being
updated iteratively. The principal-angle analysis can then be used as a classical preprocessing step to identify a favorable encoding
orientation before the hardware optimization.

\section{Conclusion}
\label{sec:conclusion}

We have introduced a geometry-independent framework for quantifying and optimizing the compatibility between a target Hamiltonian and the native control resources of an analog quantum 
simulator, based on the principal angles between the target's spectral algebra $\mathcal A(H)$ and the operator space $\mathcal K$ reachable under the device's control Hamiltonians. Because every encoding related to $H$ by a unitary
transformation shares the same physical spectrum, this criterion turns encoding selection into a well-posed, smooth optimization over the unitary orbit of $H$, solved here by Riemannian gradient descent on $U(d)$. We implement this optimization using Riemannian gradient descent on the unitary group, providing a systematic procedure for selecting representations that are 
better aligned with the available control resources.

Applied to the Hamming-weight-one deuteron Hamiltonian on a Rydberg-atom analog processor, the framework reveals a clear distinction between algebra-level compatibility and state-level 
reachability. For $N=3$ and $N=4$, the standard-binary encoding is strongly misaligned with the geometry-independent Krylov space, whereas principal-angle optimization can bring the complete
target spectral algebra into that space. The subsequent dynamical optimization confirms that this algebraic improvement translates into essentially exact ground-state preparation. 
For $N=5,\ldots,8$, in contrast, complete algebra containment is not achieved even after encoding optimization. Nevertheless, high-fidelity ground-state preparation remains possible 
when the atomic geometry and dynamical controls are optimized jointly. Thus, containment of the full spectral algebra in $\mathcal K$ is a strong compatibility criterion, but it is not 
a necessary condition for preparing a particular target state.

The fixed-geometry calculations demonstrate the practical value of optimizing the encoding independently of the physical layout. For the standard-binary representation, the achievable 
fidelity varies strongly among the geometries considered, with near-zero fidelities occurring for several system sizes and geometries. A single principal-angle-optimized encoding, determined 
before the geometry is specified and then reused unchanged, substantially reduces this sensitivity and yields high fidelities across all five tested geometries. The remaining geometry dependence 
is expected: the geometry-independent optimization does not include the interaction Hamiltonian $H_0(\mathbf r)$ and therefore cannot, by construction, optimize the compatibility with 
every possible physical layout. The encoding optimization should consequently be viewed as a preprocessing step that improves the representation before the remaining geometry- and pulse-dependent control problem is solved.

These results point to a practical design principle that is particularly relevant given the current state of analog quantum hardware. Present-day neutral-atom and trapped-ion analog processors
implement, natively and with high fidelity, Ising-type two-body interactions driven by global control fields -- exactly the control structure assumed throughout this work. 
Engineering local, qubit-addressed control or realizing effective $XY$ (flip-flop) interactions generally requires additional experimental resources beyond the global-control, Ising-type interactions that are naturally available in many Rydberg-atom platforms~\cite{Morgado2020,Browaeys2020}.
 In this landscape, the encoding of a target problem is one of the few remaining design freedoms that requires no 
new hardware capability: it is a classical, offline optimization performed once, before any pulse is applied, and it can convert a Hamiltonian that is only reachable under $XY$-type or locally 
addressed control into one that is reachable under the Ising-plus-global-control resources a given device already provides. This is precisely the situation encountered here -- the deuteron 
Hamiltonian's native hopping structure is of $XY$ type, while the Rydberg platform's native interaction is Ising -- and it is a situation that recurs whenever a target model's natural qubit representation 
does not match the interaction type a given analog platform implements natively. Where a full joint optimization of encoding, control pulses, and atomic geometry is computationally 
prohibitive, or where the geometry is constrained by the hardware layout, the geometry-independent reachability criterion introduced here offers a low-cost preprocessing step that can be 
evaluated before committing to a physical implementation.

More broadly, the results show that spectral equivalence alone is insufficient to characterize the suitability of an encoding for analog quantum simulation. Two unitarily equivalent representations can have identical spectra while exhibiting very different compatibility with the native operator space of the hardware. Principal-angle optimization makes this distinction explicit and provides a constructive way to exploit the encoding freedom. Its role is not to replace dynamical controllability analysis or device-specific pulse and geometry optimization, but to identify and remove representation-level incompatibilities before those more expensive optimizations are performed. This separation between algebraic encoding design and physical control optimization offers a systematic route toward more robust mappings of target Hamiltonians onto constrained analog quantum hardware.

Several extensions are natural. First, the present analysis is restricted to small system sizes for which dense-matrix representations of $\mathcal A(H)$ and $\mathcal K$ remain tractable; extending the
construction to exploit Pauli sparsity, symmetry, or tensor-network representations, as discussed in Sec.~\ref{sec:scaling}, would be required to reach the system sizes of experimental interest. Second,
the geometry-independent construction of $\mathcal K$ deliberately excludes the interaction term $H_0(\mathbf r)$; a criterion that incorporates $H_0$ at a fixed, hardware-specified geometry would
sharpen the diagnostic for exactly the fixed-geometry regime found here to be most fragile, and could explain the residual, geometry-dependent effects -- such as the comparative robustness of $N=8$ and the
persistent difficulty of the least-symmetric tested geometry -- that the present, geometry-independent criterion cannot by construction capture. Third, while this work optimizes a single smooth objective
built from all principal angles, a worst-case objective based on the smallest singular value, or a direct treatment of state-level rather than algebra-level reachability, may prove more efficient for
larger systems where full algebra containment is not achievable; we introduced the former in Sec.~II and leave a systematic comparison to future work. Finally, the same framework applies unchanged
to any analog platform describable by a control-affine Hamiltonian, and to any target model whose qubit encoding is not fixed a priori -- including other benchmark problems in nuclear and high-energy
physics currently being explored on analog and digital quantum hardware~\cite{BauerNatRevPhys2023,BauerPRXQuantum2023}.


\begin{acknowledgments}
The author is grateful to D. Lacroix and V. Soma for fruitful discussions. This work is part of HQI initiative (www.hqi.fr).
\end{acknowledgments}

\appendix

\section{Krylov space construction for $N=3$ and $N=4$}\label{app:Krylov}
\begin{widetext}
\paragraph{Krylov space for $N=3$.}

The initial state density matrix expands as
\[
\rho_0=\frac18\Big[\mathbbm 1+(Z_0+Z_1+Z_2)+(Z_0Z_1+Z_0Z_2+Z_1Z_2)+Z_0Z_1Z_2\Big].
\]

Applying the previous proposition, we get:

\emph{First level commutators:} 
Using \eqref{eq:level1} with $N=3$:
\begin{align*}
[S_x,\rho_0]&=-\frac i4\Big[S^{(3)}_{0,1,0}+S^{(3)}_{0,1,1}+S^{(3)}_{0,1,2}\Big]\\
&=-\frac i4\Big[(Y_0+Y_1+Y_2)+\!\!\sum_{i<j}\!(Y_iZ_j+Z_iY_j)+(Y_0Z_1Z_2+Z_0Y_1Z_2+Z_0Z_1Y_2)\Big],\\[4pt]
[S_y,\rho_0]&=\frac i4\Big[S^{(3)}_{1,0,0}+S^{(3)}_{1,0,1}+S^{(3)}_{1,0,2}\Big]\\
&=\frac i4\Big[(X_0+X_1+X_2)+\!\!\sum_{i<j}\!(X_iZ_j+Z_iX_j)+(X_0Z_1Z_2+Z_0X_1Z_2+Z_0Z_1X_2)\Big],\\[4pt]
[S_z,\rho_0]&=0.
\end{align*}
The weight-3 sector is already populated at zeroth order, through the term $Z_0Z_1Z_2$ present in $\rho_0$ itself; the first-order commutators above then rotate this seed into the weight-3 operator $Y_0Z_1Z_2+Z_0Y_1Z_2+Z_0Z_1Y_2$ (and its $S_y$-counterpart with $X\leftrightarrow Y$), confirming that this sector is active and consistent with point (ii) of Proposition~\ref{prop:krylov}.

\emph{Second level commutators:} Iterating $S_x$ on its own commutator, using $[X_i,Y_i]=2iZ_i$, $[X_i,Z_i]=-2iY_i$, and $[X_i,X_i]=0$:
\begin{align*}
[S_x,[S_x,\rho_0]]&=\tfrac12 S^{(3)}_{0,0,1}+S^{(3)}_{0,0,2}-S^{(3)}_{0,2,0}+\tfrac32S^{(3)}_{0,0,3}-S^{(3)}_{0,2,1}\\
&=\tfrac12(Z_0+Z_1+Z_2)+\!\sum_{i<j}\!Z_iZ_j-\!\sum_{i<j}\!Y_iY_j+\tfrac32 Z_0Z_1Z_2-(Y_0Y_1Z_2+Y_0Z_1Y_2+Z_0Y_1Y_2),
\end{align*}
and, by the mirror computation with $[Y_i,X_i]=-2iZ_i$, $[Y_i,Z_i]=2iX_i$,
\begin{align*}
[S_y,[S_y,\rho_0]]&=\tfrac12S^{(3)}_{0,0,1}-S^{(3)}_{2,0,0}+S^{(3)}_{0,0,2}+\tfrac32S^{(3)}_{0,0,3}-S^{(3)}_{2,0,1}\\
&=\tfrac12(Z_0+Z_1+Z_2)-\!\sum_{i<j}\!X_iX_j+\!\sum_{i<j}\!Z_iZ_j+\tfrac32Z_0Z_1Z_2-(X_0X_1Z_2+X_0Z_1X_2+Z_0X_1X_2).
\end{align*}
These two double commutators already inject the pure-$Y$, pure-$X$, pure-$Z$ weight-2 operators and three of the weight-3 mixed types. The remaining compositions ($S^{(3)}_{1,1,0}$ at weight 2, and $S^{(3)}_{3,0,0},S^{(3)}_{0,3,0},S^{(3)}_{2,1,0},S^{(3)}_{1,2,0},S^{(3)}_{1,1,1}$ at weight 3) are generated in the same way by the cross double commutators $[S_x,[S_y,\rho_0]]$, $[S_y,[S_x,\rho_0]]$ and by third-level commutators; no qualitatively new mechanism appears, and no operator of weight $>3$ can ever be produced (point (i) above). The chain saturates once every composition of every weight $w\le 3$ has appeared.

The resulting Krylov space is therefore, organized by weight,
\begin{widetext}
\begin{eqnarray*}
&&{\cal K}_m^{\rm ctrl}(\rho_0)=\underbrace{\big[\,S_x,\,S_y,\,S_z\,\big]}_{w=1,\ \dim 3}\\
&\cup&
\underbrace{\big[\,X_0X_1{+}X_0X_2{+}X_1X_2,\ \ Y_0Y_1{+}\ldots,\ \ Z_0Z_1{+}\ldots,
\textstyle\sum_{i<j}(X_iY_j{+}Y_iX_j),\ \textstyle\sum_{i<j}(X_iZ_j{+}Z_iX_j),\ \textstyle\sum_{i<j}(Y_iZ_j{+}Z_iY_j)\,\big]}_{w=2,\ \dim 6}\\
&\cup&\big[\,X_0X_1X_2,\ Y_0Y_1Y_2,\ Z_0Z_1Z_2,
X_0X_1Y_2{+}X_0Y_1X_2{+}Y_0X_1X_2,\ \ (\text{and the 5 analogous $2{+}1$ mixed sums}),\\
&&\underbrace{X_0Y_1Z_2{+}X_0Z_1Y_2{+}Y_0X_1Z_2{+}Y_0Z_1X_2{+}Z_0X_1Y_2{+}Z_0Y_1X_2\,}_{w=3,\ \dim 10}\big]
\end{eqnarray*}
\end{widetext}
i.e.\ exactly one symmetrized operator $S^{(3)}_{n_X,n_Y,n_Z}$ for every composition of $w=1,2,3$ into three non-negative parts: $3+6+10=19$ operators, plus the identity, for a total dimension of 20 (19 without the identity).

As in the $N=2$ case, any component along an antisymmetric combination (e.g.\ $X_0X_1-X_1X_2$, or any sum not invariant under permutation of the three atoms) is inaccessible from $\rho_0$ under these controls.

\paragraph{Krylov space for $N=4$.}

The initial state now expands as
\[
\rho_0=\frac1{16}\Big[\mathbbm 1+\sum_iZ_i+\sum_{i<j}Z_iZ_j+\sum_{i<j<k}Z_iZ_jZ_k+Z_0Z_1Z_2Z_3\Big].
\]
The same first-level formula \eqref{eq:level1} gives
\begin{align*}
[S_x,\rho_0]&=-\frac i8\Big[S^{(4)}_{0,1,0}+S^{(4)}_{0,1,1}+S^{(4)}_{0,1,2}+S^{(4)}_{0,1,3}\Big],\\
[S_y,\rho_0]&=\frac i8\Big[S^{(4)}_{1,0,0}+S^{(4)}_{1,0,1}+S^{(4)}_{1,0,2}+S^{(4)}_{1,0,3}\Big],\qquad\\
 [S_z,\rho_0]&=0,
\end{align*}
where now the top term $S^{(4)}_{0,1,3}=\sum_{i}Y_i\!\!\prod_{k\ne i}\!Z_k$ (one $Y$, three $Z$'s, summed over the position of the $Y$ among all four atoms) already reaches the maximal weight $w=N=4$, coming directly from the $Z_0Z_1Z_2Z_3$ term of $\rho_0$ --  the direct four-atom analogue of what happened for $N=3$.

Exactly the same mechanism as above then applies within each weight sector: point (i) forbids any change of weight, and iterating $S_x,S_y$ (double and triple commutators, entirely analogous to the $[S_x,[S_x,\rho_0]]$, $[S_y,[S_y,\rho_0]]$ computations done explicitly for $N=3$) fills in every composition $(n_X,n_Y,n_Z)$ at each weight $w=1,2,3,4$, with no further growth once all compositions of weight $\le 4$ are present. The Krylov space is thus
\begin{widetext}
\[
{\cal K}_m^{\rm ctrl}(\rho_0)=\Big\{\,S^{(4)}_{n_X,n_Y,n_Z}\ \Big|\ n_X,n_Y,n_Z\ge0,\ 1\le n_X+n_Y+n_Z\le4\,\Big\},
\]
organized by weight as:
\begin{itemize}
\item $w=1$ (dim 3): $S_x,\,S_y,\,S_z$.
\item $w=2$ (dim 6): $\sum_{i<j}X_iX_j,\ \sum_{i<j}Y_iY_j,\ \sum_{i<j}Z_iZ_j,\ \sum_{i<j}(X_iY_j{+}Y_iX_j),\ \sum_{i<j}(X_iZ_j{+}Z_iX_j),\ \sum_{i<j}(Y_iZ_j{+}Z_iY_j)$, each summed over the $\binom42=6$ pairs.
\item $w=3$ (dim 10): the ten compositions of $3$ into $(n_X,n_Y,n_Z)$ --  $(3,0,0),(0,3,0),(0,0,3),(2,1,0),(2,0,1),(1,2,0),(0,2,1),(1,0,2),(0,1,2),(1,1,1)$ --  each symmetrized over the $\binom43=4$ triples of active atoms (and, for $(1,1,1)$, over the $3!=6$ letter assignments within each triple).
\item $w=4$ (dim 15): the fifteen compositions of $4$ into $(n_X,n_Y,n_Z)$ --  $(4,0,0),(0,4,0),(0,0,4)$; $(3,1,0),(3,0,1),(1,3,0),(0,3,1),(1,0,3),(0,1,3)$; $(2,2,0),(2,0,2),(0,2,2)$; $(2,1,1),(1,2,1),(1,1,2)$ --  each now involving all four atoms, e.g. $S^{(4)}_{4,0,0}=X_0X_1X_2X_3$, $S^{(4)}_{3,1,0}=X_0X_1X_2Y_3+X_0X_1Y_2X_3+X_0Y_1X_2X_3+Y_0X_1X_2X_3$, up to $S^{(4)}_{2,1,1}$ which sums all $4!/(2!1!1!)=12$ distinct placements of $\{X,X,Y,Z\}$ on the four atoms.
\end{itemize}
\end{widetext}

The dimension count is $3+6+10+15=34$, plus the identity, giving a total Krylov space dimension of 35 (34 without the identity).

As before, only permutation-symmetric combinations are reachable: any antisymmetric combination of same-weight strings (e.g.\ $X_0X_1X_2X_3$ terms built with a relative minus sign under atom exchange) lies outside ${\cal K}_m^{\rm ctrl}(\rho_0)$ and constitutes an obstruction to reachability of a target state from the fully polarized ground state under global (collective) Rydberg control.
\end{widetext}



\begin{thebibliography}{100}
\bibitem{Surace2020} F. Surace et al., {\it{Lattice Gauge Theories and String Dynamics in Rydberg Atom Quantum Simulators}}, Physical Review X 10, 021041 (2020)
\bibitem{Zhu2024} Z. H. Zhu et al.,   {\it{Probing false vacuum decay on a cold-atom gauge-theory quantum simulator}}, arXiv:2411.12565v1 [cond-mat.quant-gas]
\bibitem{Morgado2011} M. Morgado et al.,  {\it{Quantum simulation and computing with Rydberg-interacting qubits}}, arXiv:2011.03031v2 [quant-ph] 3
\bibitem{Georgescu2014} I. M. Georgescu, S. Ashhab, and F. Nori, {\it{Quantum simulation}}, Rev. Mod. Phys. 86, 153 (2014).
\bibitem{Zeier2011} R. Zeier and T. Schulte-Herbr{\"u}ggen, {\it{Symmetry principles in quantum systems theory}}, J. Math. Phys. 52, 113510 (2011).
\bibitem{DAlessandro2021} D. D'Alessandro and J. T. Hartwig, {\it{Dynamical decomposition of bilinear control systems subject to symmetries}}, J. Dyn. Control Syst. 27, 1 (2021).
\bibitem{DAlessandro2020} D. D'Alessandro, {\it{Topological properties of reachable sets and the control of quantum bits}}, Systems and Control Letters,
Volume 41, Issue 3, 2000, 213-221 (2000)
\bibitem{Wiersema2024} R. Wiersema, E. K{\"o}kc{\"u}, A. F. Kemper, and B. N. Bakalov, {\it{Classification of dynamical Lie algebras for translation-invariant 2-local spin systems in one dimension}}, npj Quantum Inf. 10, 110 (2024).
\bibitem{Khaneja2005} N. Khaneja, T. Reiss, C. Kehlet, T. Schulte-Herbr{\"u}ggen, and S. J. Glaser, {\it{Optimal control of coupled spin dynamics: design of NMR pulse sequences by gradient ascent algorithms}}, J. Magn. Reson. 172, 296 (2005).
\bibitem{BauerNatRevPhys2023} C. W. Bauer, Z. Davoudi, N. Klco, and M. J. Savage, {\it{Quantum simulation of fundamental particles and forces}}, Nat. Rev. Phys. 5, 420 (2023).
\bibitem{BauerPRXQuantum2023} C. W. Bauer et al., {\it{Quantum simulation for high-energy physics}}, PRX Quantum 4, 027001 (2023).
\bibitem{Dimitrescu2018}E. F. Dumitrescu et al. {\it{Cloud Quantum Computing of an Atomic Nucleus}}, Physical Review Letters 120, 210501 (2018)
\bibitem{Siwach2021} P. Siwach and P. Arumugam, {\it{Quantum simulation of nuclear Hamiltonian with a generalized transformation for Gray code encoding}}, Phys. Rev. C 104, 034301 (2021).
\bibitem{Absil2008} P.-A. Absil, R. Mahony, and R. Sepulchre, l. {\it{Optimization Algorithms on Matrix Manifolds}}
(Princeton University Press, Princeton, NJ, 2008).
\bibitem{SchulteHerbrueggen2008} T. Schulte-Herbr\"uggen, S. J. Glaser, G. Dirr, and U. Helmke,
l. {\it{Gradient Flows for Optimisation and Quantum Control: Foundations and Applications,}}
arXiv:0802.4195 [quant-ph] (2008).
 \bibitem{Nocedal2006} J. Nocedal and S. J. Wright, l. {\it{Numerical Optimization}}, 2nd ed. (Springer, New York, 2006).
\bibitem{Wiersema2023} R. Wiersema and N. Killoran, l. {\it{Optimizing quantum circuits with Riemannian gradient flow,}}
Phys. Rev. A \textbf{107}, 062421 (2023)
\bibitem{Browaeys2020} A. Browaeys and T. Lahaye, {\it{Many-body physics with individually controlled Rydberg atoms}}, Nat. Phys. 16, 132 (2020).
\bibitem{Morgado2020}M. Morgado et al.,  ``Quantum simulation and computing with Rydberg-interacting qubits",
 AVS Quantum Science, 2020, https://api.semanticscholar.org/CorpusID:234771773
\end{thebibliography}
\end{document}